\newtheorem{theorem}{Theorem}
\newtheorem{Definition}[theorem]{Definition}
\newtheorem{Lemma}[theorem]{Lemma}
\newtheorem{Remark}[theorem]{Remark}
\begin{document}

\title{On the Energy Efficient Displacement of Random Sensors for Interference and Connectivity} 
\author{Rafa\l{} Kapelko
\IEEEcompsocitemizethanks{\IEEEcompsocthanksitem 
R. Kapelko is with the Department of Computer Science,
Faculty of Fundamental Problems of Technology, Wroc{\l}aw University of Science and Technology, Poland.\protect\\
E-mail: rafal.kapelko@pwr.edu.pl.}
}
\maketitle

\begin{abstract}
This paper investigates the problem of the minimilization of energy consumption in reallocation of wireless mobile sensors network (WMSN) to assure good communication without interference.

Fix $d\in\mathbb{N}\setminus\{0\}.$ Assume  $n$ sensors are initially randomly placed in the hyperoctant $[0,\infty)^d$ according to $d$ identical and independent Poisson processes 
each with arrival rate $\lambda>0.$ 

Let $0< s \le v$ be given real numbers. We are allowed to move the sensors, so that 
every two consecutive sensors are placed at distance greater than or equal to $s$ and less than or equal to $v.$

Fix $a\ge 1.$ Assume that $i-$th  sensor is displaced a distance equal to $m(i).$ The cost measure for the  displacement of the team of sensors is the sum $\sum_{i=1}^{n}d_i^a$ ($a-$total movement).

In this work, we discover and explain \textbf{a sharp decline} and \textbf{a sharp increase} (a threshold phenomena) in the expected minimal $a-$total movement around the interference-connectivity distances $s,v$ equal to $\frac{1}{\lambda}.$
\end{abstract}
\begin{IEEEkeywords}
Interference, Connectivity, Analysis of algorithms, Random, Sensors, Poisson process
\end{IEEEkeywords}


\IEEEpeerreviewmaketitle

\section{Introduction}\label{sec:introduction}
Wireless mobile sensors network (WMSN) (e.g. see \cite{abbasi2009movement, trans2018,  kapelko_sensor2018, kim2017, li2016} and \cite{mohamed2017}) are being deployed for detecting and monitoring events which occur in many instances of every day life. 
However, it is often their case that monitoring may not be as effective due to 
external factors such as harsh environmental conditions, sensor faults, geographic obstacles, etc.
In such cases sensor realignments may be required, e.g., sensors must be relocated from their initial positions
to new positions so as to attain the desired communication characteristics.
The resulting problem is assigning final positions to the sensors in order to minimize the reallocation cost.

Fix $d\in\mathbb{N}\setminus\{0\}.$
The present paper is concerned with random realignments of sensors in the hyperoctant. Assume that $n$ sensors are initially placed
in the hyperoctant $[0, \infty)^d$ 
according to $d$ identical and independent Poisson processes each with arrival rate $\lambda>0.$
\footnote{\textbf{Lets take the airplane which randomly droppes mobile sensors. This is the case $d=1$ of our model.}}

Given that the sensors are initially placed on the domain at random according to some well-defined random process, we are interested
to ensure that by moving the sensors the following scheduling requirement is satisfied.

\begin{Definition}[$(s,v)-IP$]
The $(s,v)-$\textit{interference-connectivity problem} 
requires that every two consecutive sensors\footnote{{The precise meaning of \textit{consecutive sensors} in the higher dimension\\ ($d>1$) will be explain further
in Definition \ref{def:atotek}  from Section \ref{chap:higher}.} } are placed at distance greater than or equal to $s$
and less than or equal to $v$
for some 
$s,v$ such that
$0<s\le v.$
\end{Definition}

It is very well known that proximity between sensors affects transmission and reception signals and causes
degradation of performance (see \cite{gupta}). The closer the distance between neighbouirng sensors, the higher the resulting
interference. 

Additionally, a typical sensor is able to sense and monitor a bounded region \cite{Hu:2014}, \cite{TIAN}.
In the theoretical model, the sensing area of each sensor node is a disk of radius $r$ (see \cite{kumar2005}).
Therefore, to ensure a good communication or connection of the whole network
the sensors can not be too far from each other. 

Let us consider the following simple example. Fix $k\in\mathbb{N}\setminus\{0\}.$ Assume that the sensors on the $[0,\infty)$ fullfil $(s,v)-IP$ requirement.
In this ideal case the sensor radius equal to $\frac{kv}{2}$ is enough for $k-$connectivity of the whole network, i.e. every point of the monitoring region is within the radius of at least $k$ sensors.
It is not difficult to see that, the similar argument also holds in the higher dimensions.

Thus, in $(s,v)-IP$ problem the goal is to ensure a good communication of the whole network while at the same time the consecutive sensors are not too close.

The initial placement of the sensors does not guarantee $(s,v)-IP$ interference-connectivity requirement
since the sensors have been placed \textbf{randomly} according to the arrival times of Poisson processes. 

Clearly, a sufficient density of random nodes is necessary to achieve property that the distance between two sensors is less than $v$
(see \cite{wang2009})
and some authors have proposed using several rounds of random displacement for desired connectivity \cite{eftekhari13,yan}.
Another approach is to have the sensors move from their initial location to a new position so as to achieve the desired connectivity property.
(e.g. see \cite{MinMax,barrierMinSum,eftekhari13d,Liu2005} and \cite{Saipulla2010}).

Obviously, the several rounds of random dispersal cause proximity between sensors. Hence to achieve property that any pair of sensors are separated by a distance of at least $s$
the sensors have to relocate.

Let us recall that, in this study  $n$ sensors are initially placed
in the hyperoctant $[0, \infty)^d$ 
according to $d$ identical and independent Poisson processes each with arrival rate $\lambda>0.$

To attain the requirements that any pair of sensors are separated by a distance of at least $s$ and no two consecutive sensors are placed
at distance greater than
$v,$ the sensors have to move from their initial random location to new position.

The fundamental problem is {\em the energy consumption} of the displacement of the team of sensors.
We define the cost measure 
{\em $a-$total movement} as follows.
\begin{Definition}[$a-$total movement]
\label{def:atot}
Let $a\ge 1$ be a constant. Suppose that the $i-$th sensor's displacement is equal to $|m(i)|.$
The {\em $a-$total movement} is defined as the sum $M_a:= \sum_{i=1}^n |m(i)|^a$. 
\end{Definition}
The main question we address in this paper is the following. Assume that $n$ sensors are initially placed
in the hyperoctant $[0, \infty)^d$ 
according to $d$ identical and independent Poisson processes each with arrival rate $\lambda>0.$
What is the expected minimal $a-$total movement 
so as to solve the problem $(s,v)-IP$
as a function of the parameter $s, v, n, \lambda, a, d$?
Our goal is to investigate tradeoffs arising among the parameters $s, v, n, \lambda, a, d.$

Let us fix $\epsilon, \tau >0$ arbitrary small constants independent on the number of sensors.
We explain the \textit{threshold phenomena} around the interference-connectivity distances $s=v=\frac{1}{\lambda}$
for the expected minimal $a-$total movement.
\subsection{Related Work}
Interference has been the subject of extensive interest in research community in the last decade. 
Some papers study interference in relation to network performance degradation \cite{gupta, degra08}.
Moscibroda et al. \cite{mosci02} consider the average interference problem while maintaining the desired network properties such as connectivity,
multicast trees or point-to-point connections, while in \cite{burkh02} the authors propose connectivity preserving and spanner constructions which are interference
optimal. The interference minimization in wireless ad-hoc networks in a plane was studied in \cite{halld07}. 
Further, \cite{Fu_2014, acm_2016} investigates  the  problem  of  optimally
determining source-destination connectivity in random networks.

More importantly, our work is related to the paper \cite{kranakis_shaikhet} where the authors consider the expected minimal total displacement
required so that every pair of sensors are in their final positions at distance greater or equal to $s$ for $n$ sensors placed uniformly according to Poisson process with arrival rate $\lambda=n.$

Compared to the minimum distance between two sensors, the $(s,v)-IP$ scheduling requirement not only avoids interference, but also ensures good connectivity and is more reasonable when considering interference.
Our analysis also generalizes the result of the paper \cite{kranakis_shaikhet} from $a=1$ to all exponents $a\ge 1$ and all Poisson processes with arrival rate $\lambda>0$  for more realistic
expected minimal $a-$total movement.
It is worth mentioning that some asymptotic bounds in \cite{kranakis_shaikhet} are one-sided.
We give full asymptotic results (lower and upper bound, exact asymptotics) which explain the 
tradeoffs arising among the parameters $s, v, n, a, d.$

Connectivity and barrier coverage have been extensively studied in research community (e.g., see \cite{barriercoverageNodeDegree,swat2012,cortes2012,Dobrev2017,sajal2008,Johnson2012,kumar2005} and \cite{saipulla2009}).
The work by \cite{kumar2005} introduced two notions of probabilistic barrier coverage: weak and strong barrier coverage.
In \cite{Lazos2006} Lazos et al. derived analytical expressions of coverage for heterogeneous sensor networks
The paper \cite{siamcontrol2015} investigated the problem of placing unreliable sensors in the unit interval to optimize the maximum cost.

Finally, it is worth mentioning that, our work is closely related  to the series of papers \cite{ICDCNkapelko}, \cite{pervasiveKAPELKO},  \cite{KK_2016_cube}, \cite{kapelkokranakisIPL}.
In \cite{kapelkokranakisIPL} the problem of energy consumption
of random sensors 
is analyzed to cover a unit interval and in  \cite{KK_2016_cube}
to provide full coverage of $d-$dimensional cube. The paper \cite{ICDCNkapelko} investigated the maximum of the expected sensor's displacement (the time required) for coverage with interference on the line.
Further, \cite{pervasiveKAPELKO} considered the maximum of the expected sensor's displacement to the power.

\subsection{Outline and Results of the Paper}
\label{chap:inter}
Throughout this paper $\epsilon, \tau>0$ are arbitrary small constants independent on $n.$

Fix $d\in\mathbb{N}\setminus\{0\}.$ Let $a\ge 1$ be a constant. Assume  $n$ sensors are initially placed in the hyperoctant $[0,\infty)^d$ according to $d$ identical and independent Poisson processes 
each with arrival rate $\lambda.$ We want to have the sensors moving from their current random locations to positions to ensure $(s,v)-IP$ interference-connectivity requirement.

We derive tradeoffs between the expected minimal $a-$total movement
and the interference-connectivity distances $s,v.$  Table \ref{tab:higher} 
summarizes the results proved in Section \ref{chap:line} and Section \ref{chap:higher}.\footnote{We recall the following asymptotic notation:
$(i)$ $f(n)=O(g(n))$ if there exists a constant $C_1>0$ and integer $N$ such that $|f(n)|\le C_1|g(n)|$ for all $n>N,$
$(ii)$ $f(n)=\Omega(g(n))$ if there exists a constant $C_2>0$ and integer $N$ such that $|f(n)|\ge C_2|g(n)|$ for all $n>N,$
$(iii)$ $f(n)=\Theta(g(n))$ if and only if $f(n)=O(g(n))$ and $f(n)=\Omega(g(n)).$}
\begin {table}[H]
\caption {The expected minimal $a-$total movement ($a\ge 1$)  of $n$ sensors in the hyperoctant $[0,\infty)^d$ as a function of the interference-connectivity distances $s, v.$ } \label{tab:higher} 
\begin{center}
\label{tab:dwa}
 \begin{tabular}{|c|c|c|} 

 \hline
 \begin{tabular}{c}Interference-\\-connectivity\\ distances $s, v$\end{tabular}  & \begin{tabular}{c}Expected minimal\\  $a-$total movement\end{tabular} &  Algorithms \\ 
  \hline
 \begin{tabular}{c}$s=\frac{1-\epsilon}{\lambda},$ $v=\frac{1+\tau}{\lambda}$\\ $\epsilon,\tau>0$\end{tabular} & $O(n)/\lambda^a$ & $I_d\left(n,\frac{1-\epsilon}{\lambda},\frac{1+\tau}{\lambda}\right)$  \\ 
 \hline
 \begin{tabular}{c}$s=\frac{1}{\lambda},$ $v=\frac{1}{\lambda}$\end{tabular} & 

$\Theta\left(n^{1+\frac{a}{2d}}\right)/\lambda^a$  

 & $MV_d\left(n,\frac{1}{\lambda}\right)$ \\  
 \hline
 \begin{tabular}{c}$s=\frac{1+\epsilon}{\lambda},$ $v=\frac{1+\tau}{\lambda}$  \\ $\tau\ge \epsilon>0$\end{tabular} & 
$\Theta\left(n^{1+\frac{a}{d}}\right)/\lambda^a,$   & 
$MV_d\left(n,\frac{1+\epsilon}{\lambda}\right)$ \\ 
 \hline
 \end{tabular}
\end{center}
\end{table}
Let us consider case $a=2$ and $d=1.$ We prove the following results.
\begin{itemize}
\item For interference-connectivity distances $s=v=\frac{1}{\lambda}$ the expected minimal $2-$total movement is in $\Theta\left(n^2\right)/{\lambda^2}.$
\item If $s=\frac{1-\epsilon}{\lambda}$ is below $\frac{1}{\lambda}$ and $v=\frac{1+\tau}{\lambda}$ is above $\frac{1}{\lambda},$ the expected minimal $2-$total movement declines sharply to $O\left(n\right)/\lambda^2.$
\item If both interference-connectivity distances $s=\frac{1+\epsilon}{\lambda}$ and $v=\frac{1+\tau}{\lambda}$ are above $\frac{1}{\lambda},$ the expected minimal $2-$total movement increses sharply to 
$\Theta\left(n^3\right)/\lambda^2.$ 
\end{itemize}
Similar sharp decrease and increase hold in all dimensions ($d\in \mathbb{N}\setminus \{0,\}$) and for all exponents $a\ge 1.$ 
Hence,  our investigations explain the \textit{threshold phenomena} around the interference-connectivity distances equal to $\frac{1}{\lambda}$ 
as this affects the expected minimal $a-$total movement of the sensors to fullfil $(s,v)-IP$ interference-connectivity requirement on the line and in the higher dimension.

Here is an outline of the paper. In Section \ref{chap:first} we provide several preliminary facts that will be used in the sequel.
In Section \ref{chap:line} we investigate sensors on the line. 
In Section \ref{chap:higher} we investigate sensors in the higher dimensions.
Section \ref{sec:sim} deals with the simulation results of our Algorithms (\ref{alg_one}-\ref{alg_interference}).
\section{Model and preliminaries.}
\label{chap:first}
In this section we recall some useful properties of the Poisson process and only some basic facts about special numbers and random variables which will be useful in the analysis in the next sections.

We consider $n$ random sensors 
initially placed in the half-infinite interval $[0,\infty)$ according to
Poisson process with arrival rate $\lambda>0.$
Assume that, the $i-$th event represents the location of the $i-$th sensor, for
$i=1,2,\dots,n.$ 

Let $X_i$ be the arrival time of the $i-$th event in this Poisson process, i.e., the position of the $i-$th sensor in the interval $[0,\infty).$
We know that the random variable $X_i$ obeys the Gamma distribution with parameters $i, \lambda.$
Its probability density function is given by $f_{i,\lambda}(t)=\lambda e^{-\lambda t}\frac{(\lambda t)^{i-1}}{(i-1)!}$
and $\Pr\left[X_i\ge t\right]=\int_t^{\infty}\lambda e^{-\lambda t}\frac{(\lambda t)^{i-1}}{(i-1)!}.$
Moreover, the following identity holds 
\begin{equation}
\label{eq:probadens} 
X_{j+l}-X_j=X_l,
\end{equation}
provided that $j,l\in\mathbf{N_+},$ 
(see \cite{kapelkogamma, kingman,dam_2014, ross_2002} for additional details on the Poisson process). Notice that,
\begin{equation}
\label{integral_2}
\int_0^{\infty}t^bf_{l,\lambda}(t)dt=\frac{1}{\lambda^b}\frac{(l-1+b)!}{(l-1)!},
\end{equation}
where $b$ is non-negative integer and $l,n$ are positive integers ( see \cite[Chapter 15]{stat_2011}).


We will use the following notations
for the rising factorial \cite{concrete_1994}
$$n^{\overline{k}} = \begin{cases} 1 &\mbox{for } k=0 \\
n(n+1)\dots(n+k-1) & \mbox{for } k\ge 1. \end{cases}$$
Let ${ n\brack k}$  be the Stirling numbers of the first kind, which are
defined for all integer numbers such that $0\le k \le n.$ 

The Stirling numbers of the first kind arise as coefficients of the rising factorial (see   
\cite[Identity 6.11]{concrete_1994})
\begin{equation}
\label{eq:stirling3}
x^{\overline{m}}=\sum_{l_2}{ m\brack l_2}x^{l_2}.
\end{equation}

A crucial observation is the following identity, which will be useful in the asymptotic analysis of Algorithm \ref{alg_one} when interference-connectivity distances $s,v$ are equal to $\frac{1}{\lambda}.$
\begin{Lemma}
\label{lem:sum_tec}
Assume that $a$ is an even positive number. Then
$$\sum_{j}\binom{a}{j}(-1)^j{j\brack j-k}=\begin{cases} 0 &\mbox{if } 2k <a \\
\frac{a!}{\left(\frac{a}{2}\right)!2^{\frac{a}{2}}} & \mbox{if } 2k= a. \end{cases}
$$
\end{Lemma}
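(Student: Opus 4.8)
The idea is to recognise the left-hand side as an $a$-th order finite difference, evaluated at $0$, of a \emph{polynomial} in $j$, and then to apply the elementary fact that $\Delta^{a}$ annihilates polynomials of degree $<a$ and extracts $a!$ times the leading coefficient of a polynomial of degree exactly $a$. The sum $\sum_{j}$ is finite, since $\binom{a}{j}=0$ unless $0\le j\le a$.

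First I would turn the Stirling number ${j\brack j-l_1}$ into a polynomial in $j$. Applying identity \eqref{eq:euler3} with $m=j$ and $p=l_1$ gives
$$
{j\brack j-l_1}=\sum_{l}\left\langle\left\langle l_1\atop l\right\rangle\right\rangle\binom{j+l}{2l_1}=:P_{l_1}(j).
$$
Each $\binom{j+l}{2l_1}$ is a polynomial in $j$ of degree $2l_1$ with leading coefficient $1/(2l_1)!$, so $P_{l_1}$ is a polynomial in $j$ of degree $2l_1$ whose leading coefficient, using identity \eqref{eq:euler1} with $m=l_1$, equals
$$
\frac{1}{(2l_1)!}\sum_{l}\left\langle\left\langle l_1\atop l\right\rangle\right\rangle=\frac{1}{(2l_1)!}\cdot\frac{(2l_1)!}{l_1!\,2^{l_1}}=\frac{1}{l_1!\,2^{l_1}}.
$$

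Next, since $a$ is even we have $(-1)^{a-j}=(-1)^{j}$, so the finite-difference formula \eqref{eq:a_diff} applied to $f=P_{l_1}$ at $x=0$ yields
$$
\sum_{j}\binom{a}{j}(-1)^{j}{j\brack j-l_1}=\sum_{j}\binom{a}{j}(-1)^{a-j}P_{l_1}(j)=\Delta^{a}P_{l_1}(0).
$$
Now I invoke the standard fact (a one-line induction from \eqref{eq:a_diff}) that $\Delta^{a}$ kills every polynomial of degree strictly less than $a$, and maps a polynomial of degree exactly $a$ with leading coefficient $c$ to the constant $a!\,c$. Hence if $2l_1<a$ the sum is $0$, while if $2l_1=a$ it equals $a!$ times the leading coefficient $1/(l_1!\,2^{l_1})=1/((a/2)!\,2^{a/2})$, that is $\tfrac{a!}{(a/2)!\,2^{a/2}}$, which is exactly the claimed value.

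\textbf{Main obstacle.} The only point requiring care is the passage from the combinatorial quantity ${j\brack j-l_1}$ to the polynomial $P_{l_1}(j)$: one must check that identity \eqref{eq:euler3} is valid (as an equality of numbers) at all the integer arguments $j$ that actually occur in the sum, so that replacing ${j\brack j-l_1}$ by $P_{l_1}(j)$ inside $\Delta^{a}$ is legitimate; this is immediate from \eqref{eq:euler3} itself together with the convention that Stirling numbers vanish outside $0\le k\le n$. Once this substitution is justified, the remainder is the routine degree-and-leading-coefficient behaviour of the forward difference operator.
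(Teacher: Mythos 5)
Your proof is correct and takes essentially the same route as the paper's: both convert the Stirling number into a sum of binomials $\binom{j+l}{2l_1}$ via identity \eqref{eq:euler3}, recognise the sum as $\Delta^a$ evaluated at $0$ via \eqref{eq:a_diff} (using that $a$ is even), kill everything of degree below $a$, and evaluate the surviving case with \eqref{eq:euler1}. The only cosmetic difference is that you package the binomials into a single polynomial and read off its leading coefficient, whereas the paper applies $\Delta^a$ to each $\binom{x+l}{2l_1}$ term by term.
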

\begin{Remark}
The following Mathematica code can be used to confirm numerically the validity
of Lemma \ref{lem:sum_tec}.
\begin{verbatim}
F[a_,k_]:=Sum[Binomial[a,j]*(-1)^j
*StirlingS1[j,j-k],{j,k,a}]
\end{verbatim}
Then the following command
\begin{verbatim}
F[a,k] 
\end{verbatim}
gives the result of Lemma \ref{lem:sum_tec} for fixed parameters $a$ and $k$.
\end{Remark}
We will also use many times Jensen's inequality for expectactions. If $f$ is a convex function, then
\begin{equation}
\label{eq:jensen}
f\left( \mathbf{E}[X]\right)\le  \mathbf{E}\left[f(X)\right]
\end{equation}
provided the expectations exists (see \cite[Proposition 3.1.2]{ross_2002}).

The following inequality for the general random variable will be useful in the threshold tight bounds when interference-connectivity distances are greater or equal to $\frac{1}{\lambda}.$

If $Z$ is the random variable such that  $\mathbf{E}[|Z|]< \infty$ and $q\in \mathbb{R},$ then
\begin{equation}
\label{eq:first101}
\left|\mathbf{E}[Z]-q\right|\le \mathbf{E}[|Z-q|]\le\ \mathbf{E}[|Z|]+|q|.
\end{equation}
Notice that the left side of Inequality (\ref{eq:first101})  is the special case of Jensen's inequality for $X:=Z-q$ and $f(x)=|x|.$
The right side of Inequality (\ref{eq:first101}) follows from the triangle inequality $|Z-q|\le |Z|+|q|$ and
the monotonicity of the expected value.

We will also use the following notation
\begin{equation}
\label{eq:positive_part}
|x|^{+}=\max\{x,0\}
\end{equation}
for positive parts of $x\in\mathbb{R}.$



Let $f$ be non-negative integer. Then
\begin{equation}
\label{id:sumowanie}
\sum_{i=2}^n (i-1)^f =\frac{1}{f+1}n^{f+1}+\sum_{l=0}^{f}c_ln^l,
\end{equation}
where $c_l$ are some constants independent on $n$ (see \cite[Formula (6.78)]{concrete_1994}).

\section{Sensors on the Line}
\label{chap:line}
Fix $a\ge 1.$ 
Let us recall that $\epsilon, \tau>0$ are arbitrary small constants independent on $n$ and $\lambda.$
In this section we analyze $(s,v)-IP$ interference-connectivity problem when the $n$ sensors are placed in the half-infinite interval $[0,\infty)$
according to Poisson process with arrival rate $\lambda.$
\subsection{Analysis of Algorithm \ref{alg_one}}
In this subsection we present and analyse \textit{asymptocically optimal} algorithm $MV_1(n,s)$\footnote{ We note that asymptotic analysis of Algorithm \ref{alg_one}
is \textbf{crucial} in deriving \textit{the threshold phenomena}.}
(see Algorithm \ref{alg_one}).
%
\begin{algorithm}[H]
\caption{$MV_1(n,s)\,\,$ Moving sensors in the $[0,\infty)$; $s>0.$}
\label{alg_one}
\begin{algorithmic}[1]
 \REQUIRE The initial location $X_1\le X_2\le \dots \le X_n$ of the $n$ sensors in the $[0,\infty)$ 
 according to Poisson process with arrival rate $\lambda.$
 \ENSURE  The final positions of the sensors such that each pair of consecutive sensors is separated by the distance equal to $s.$
 \FOR{$i=2$  \TO $n$ } 
 \STATE{move the sensor $X_{i}$ at the position $X_1+(i-1)s;$}
 \ENDFOR
\end{algorithmic}
\end{algorithm}
 
We prove the following tight bound.
\begin{theorem} 
\label{thm:mainexactodd} Fix $\Delta\ge 0$ independent on $n$ and $\lambda.$
Let $a$ be an even natural number. 
The expected $a-$total movement of algorithm $MV_1\left(n,\frac{1+\Delta}{\lambda}\right)$
is respectively 
$$
\begin{cases}
\frac{a!}{2^{\frac{a}{2}}\left(\frac{a}{2}+1\right)!}\frac{n^{1+\frac{a}{2}}}{\lambda^a} +\frac{O\left(n^{\frac{a}{2}}\right)}{\lambda^a}
\,\,\,&\text{when}\,\,\,\Delta=0,\\
\frac{\Theta\left(n^{1+a}\right)}{\lambda^a}\,\,\,\,\,\,\,\,\,\,\,\,\,\,\,\,\,\,\,\,\,\,\,\,\,\,\,\,\,\,\,\,&\text{when}\,\,\,\Delta>0.
\end{cases}
$$

\end{theorem}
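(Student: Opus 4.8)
The plan is to compute the expected $a$-total movement of $MV_1\!\left(n,\frac{1+\epsilon_1}{n}\right)$ directly from the definition. In this algorithm sensor $i$ moves from its random position $X_i$ to $X_1+(i-1)s$ with $s=\frac{1+\epsilon_1}{n}$, so the displacement of sensor $i$ is $X_i-X_1-(i-1)s$. Since $X_i-X_1$ has the same law as $X_{i-1}$, i.e.\ the Gamma density $f_{i-1,n}$ from~(\ref{eq:probadens}), we have
\begin{equation*}
\mathbf{E}[M_a]=\sum_{i=2}^{n}\mathbf{E}\!\left[\left|X_{i-1}-(i-1)s\right|^{a}\right]
=\sum_{i=1}^{n-1}\int_{0}^{\infty}\left|t-\tfrac{i(1+\epsilon_1)}{n}\right|^{a}f_{i,n}(t)\,dt .
\end{equation*}
Here I use that the expected value of $X_{i-1}$ is $(i-1)/n$ by~(\ref{integral_2}), which is exactly $(i-1)s$ when $\epsilon_1=0$; this is why the case $\epsilon_1=0$ is the delicate one (the target equals the mean) while $\epsilon_1>0$ creates a linear-in-$i$ gap between mean and target.

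\textbf{Case $\epsilon_1=0$.} Here the target is the mean, so I expect cancellation and a lower-order answer. Since $a$ is even, $|t-i/n|^a=(t-i/n)^a$ and I can expand by the binomial theorem: $\int_0^\infty (t-i/n)^a f_{i,n}(t)\,dt=\sum_{b}\binom{a}{b}(-i/n)^{a-b}\int_0^\infty t^b f_{i,n}(t)\,dt$, and the $b$-th moment of $f_{i,n}$ is $\frac{(i-1+b)!}{(i-1)!\,n^b}$ (the $z=0$ case of~(\ref{integral_3}), equivalently a standard Gamma moment). This turns the integral into $n^{-a}$ times a polynomial-type expression in $i$; writing $\frac{(i-1+b)!}{(i-1)!}=i^{\overline{b}}$ and using the Stirling-number expansion~(\ref{eq:stirling3}) converts the whole thing into a finite-difference sum of the shape appearing in Lemma~\ref{lem:sum_tec}. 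Lemma~\ref{lem:sum_tec} is precisely the tool that kills all terms with $2l_1<a$ and leaves the single surviving coefficient $\frac{a!}{(a/2)!\,2^{a/2}}$ when $2l_1=a$. After this the $i$-sum that remains is $\sum_{i=1}^{n-1} i^{a/2}$ (up to lower order), which by~(\ref{id:sumowanie}) equals $\frac{1}{a/2+1}n^{a/2+1}+O(n^{a/2})$; multiplying by the $n^{-a}$ prefactor and the surviving constant gives $\frac{a!}{2^{a/2}(a/2+1)!}\,n^{1-a/2}+O(n^{-a/2})$. The bookkeeping of the error terms — checking that every discarded contribution is genuinely $O(n^{-a/2})$ — is the main obstacle here, and it is where I expect to lean hardest on Lemma~\ref{lem:sum_tec} and on~(\ref{id:sumowanie}).

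\textbf{Case $\epsilon_1>0$.} Now $\mathbf{E}[M_a]=\sum_{i=1}^{n-1}\mathbf{E}\!\left[\left|X_i-\tfrac{i(1+\epsilon_1)}{n}\right|^a\right]$ with target exceeding the mean $i/n$ by $\epsilon_1 i/n$. For the upper bound I use~(\ref{eq:first101}) with $Z=X_i$ and $q=i(1+\epsilon_1)/n$ together with convexity of $x\mapsto x^a$ (so $(A+B)^a\le 2^{a-1}(A^a+B^a)$) to bound each term by a constant times $\mathbf{E}[X_i^a]+(i/n)^a\cdot(1+\epsilon_1)^a=O\!\left((i/n)^a\right)$, whence $\mathbf{E}[M_a]=O\!\left(n^{-a}\sum_{i=1}^{n-1} i^a\right)=O(n)$ by~(\ref{id:sumowanie}). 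For the matching lower bound I restrict attention to the ``bulk'' indices, say $i\ge n/2$: for such $i$ the gap $\epsilon_1 i/n\ge \epsilon_1/2$ is bounded below by a positive constant, and since $|X_i-q|^a\ge (q-X_i)^+{}^a$ and $q$ is a fixed multiple of the mean, a one-sided Chebyshev/second-moment estimate (the variance of $X_i$ is $i/n^2$, small relative to $(i/n)^2$ in the bulk) shows each such term is $\Omega\!\left((i/n)^a\right)=\Omega(1)$; summing over the $\Theta(n)$ bulk indices gives $\Omega(n)$. Combining the two bounds yields $\Theta(n)$. The only mild subtlety is making the lower-bound concentration argument clean when $a<1$ is \emph{not} needed here (we are in the even-$a$ case), so convexity is available throughout; the even-$a$ hypothesis is what makes the binomial expansion in Case~$\epsilon_1=0$ exact and legitimizes invoking Lemma~\ref{lem:sum_tec}.
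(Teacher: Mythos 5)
Your proposal is correct, and for the delicate case $\epsilon_1=0$ it is essentially the paper's own proof: binomial expansion of $(t-i/n)^a$ (legitimate since $a$ is even), the Gamma moments $\int_0^\infty t^b f_{i,n}(t)\,dt=\frac{(i-1+b)!}{(i-1)!\,n^b}$, the rising-factorial/Stirling-number rewriting, Lemma~\ref{lem:sum_tec} to kill all terms with $2l_1<a$ and isolate the coefficient $\frac{a!}{(a/2)!\,2^{a/2}}$, and finally (\ref{id:sumowanie}) to evaluate $\sum_i (i-1)^{a/2}$. Where you genuinely diverge is the case $\epsilon_1>0$: the paper does not start a new argument but simply continues the same exact expansion, noting that the $l_1=0$ term contributes $\sum_j\binom{a}{j}(-1)^j(1+\epsilon_1)^{a-j}=\epsilon_1^a$, so that $D_i^{(a)}=\frac{\epsilon_1^a}{n^a}(i-1)^a$ plus lower powers of $i-1$, and $\Theta(n)$ follows at once from (\ref{id:sumowanie}). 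You instead prove the two bounds separately, with an upper bound via $|X_i-q|^a\le 2^{a-1}(X_i^a+q^a)$ and a lower bound via Chebyshev concentration on the bulk $i\ge n/2$. Both routes are valid; the paper's is more economical (one computation serves both cases) and actually pins down the leading constant $\tfrac{\epsilon_1^a}{a+1}n$, which a pure $\Theta(n)$ argument does not. Two small remarks on your version: inequality (\ref{eq:first101}) controls $\mathbf{E}[|Z-q|]$, not $\mathbf{E}[|Z-q|^a]$, so it is the convexity inequality you cite alongside it, not (\ref{eq:first101}), that carries the upper bound; and the Chebyshev step in your lower bound can be replaced by the one-line Jensen estimate $\mathbf{E}[|X_i-q|^a]\ge|\mathbf{E}[X_i]-q|^a=(\epsilon_1 i/n)^a$, which sums directly to $\Theta(n)$ and mirrors how the paper handles the analogous $a=1$ bound in Theorem~\ref{thm:mainexact_epsilon}.
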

\begin{proof}
Let $X_i$ be the arrival time of the $i-$th event in a Poisson process with arrival rate $\lambda$, i.e. the position of the $i-$th sensor in the $[0,\infty).$
We know that the random variable $X_i-X_1=X_{i-1}$ obeys the Gamma distribution with density
$$f_{i-1,\lambda}(t)=\lambda e^{-\lambda t}\frac{(\lambda t)^{i-2}}{(i-2)!}$$
for $i=2,3,\dots ,n.$ (see Equation (\ref{eq:probadens}) for $j=1$ and $l=i-1$).
Assume that, $a$ is even natural number.
Let $D^{(a)}_i$ be the expected distance to the power $a$ between $X_i-X_1$ and the $i-$th sensor position, $t_{i-1}=(1+\Delta)\frac{i-1}{\lambda},$  hence given by

\begin{align*}
D^{(a)}_i&=\int_{0}^{\infty}|t-t_{i-1}|^af_{i-1,\lambda}(t)dt\\
&=\int_{0}^{\infty}(t_{i-1}-t)^af_{i-1,\lambda}(t)dt.
\end{align*}
Observe that
$$
D^{(a)}_i=\sum_{j}\binom{a}{j}\left((1+\Delta)\frac{i-1}{\lambda}\right)^{a-j}(-1)^j\int_{0}^{\infty}t^jf_{i-1,\lambda}(t)dt.
$$
Using (\ref{integral_2}) we see that
$$
D^{(a)}_i=\frac{1}{\lambda^a}\sum_{j}\binom{a}{j}(1+\Delta)^{a-j}(-1)^j(i-1)^{a-j}\frac{(i+j-2)!}{(i-2)!}.     
$$
Let $j\in\{0,\dots, a\}.$
Applying Identity (\ref{eq:stirling3}) we deduce that
\begin{align*}
(i-1)^{a-j}\frac{(i+j-2)!}{(i-2)!}&=(i-1)^{a-j}(i-1)^{\overline{j}}\\
=\sum_{k} {j\brack j-k}(i-1)^{a-k}.
\end{align*}
Hence
$$
D^{(a)}_i=\frac{1}{\lambda^a}\sum_{j}\sum_{k}\binom{a}{j}(1+\Delta)^{a-j}(-1)^j (i-1)^{a-k}{j\brack j-k}.
$$
Changing the summation we get
$$
D^{(a)}_i=\frac{1}{\lambda^a}\sum_{k}(i-1)^{a-k}\sum_{j}\binom{a}{j}(1+\Delta)^{a-j}(-1)^j{j\brack j-k}.
$$
Now, we will estimate separately when $\Delta=0$ and when $\Delta>0.$

\textit{Case $\Delta=0.$}
Applying Lemma \ref{lem:sum_tec} we get
$$
D^{(a)}_i=\frac{1}{\lambda^a}\frac{a!}{\left(\frac{a}{2}\right)!2^{\frac{a}{2}}}\cdot (i-1)^{\frac{a}{2}}+\frac{1}{\lambda^a}\sum_{2k> a}C_{1,a-k}\cdot (i-1)^{a-k},
$$
where $C_{1,a-k}$ depends only on $a$ and $k.$
Using Identity (\ref{id:sumowanie}) 
we conclude that the expected $a-$total movement of algorithm $MV_1\left(n,\frac{1}{\lambda}\right)$ is
$$
\sum_{i=2}^{n}D^{(a)}_i=\frac{a!}{2^{\frac{a}{2}}\left(\frac{a}{2}+1\right)!}\frac{n^{1+\frac{a}{2}}}{\lambda^a} +\frac{O\left(n^{\frac{a}{2}}\right)}{\lambda^a}.
$$
This is enough to prove the case when $\Delta=0.$

\textit{Case $\Delta>0.$}
Observe that $\sum_{j}\binom{a}{j}(-1)^j(1+\Delta)^{a-j}{j\brack j}=\Delta^a.$ Therefore
$$
D^{(a)}_i=\frac{\Delta^a}{\lambda^a}\cdot (i-1)^{a}+\frac{1}{\lambda^a}\sum_{k> 0}C_{2,a-k}\cdot (i-1)^{a-k},
$$
where $C_{2,a-k}$ depends only on $a$ and $k.$
Again, applying Identity (\ref{id:sumowanie})
we conclude that the expected sum
of displacements to the power $a$ of algorithm $MV_1\left(n,\frac{1+\Delta}{\lambda}\right)$ is
$
\frac{\Theta\left(n^{1+a}\right)}{\lambda^a}.
$
This is enough to prove the case when $\Delta>0$
which completes the proof of Theorem \ref{thm:mainexactodd}.
\end{proof}
The next theorem extends the case $\Delta=0$   in our Theorem \ref{thm:mainexactodd} to real valued exponents.

It is worthwhile to mention that, the asymptotic result of Theorem \ref{thm:mainexactodd2} for all exponents $a\ge 1$ follows from 
Theorem \ref{thm:mainexactodd} when $a$ is positive even natural
and the probabilistic representation of absolute moments in terms of characteristic functions (see \cite{ushakov}, \cite{bahr} for details).
\begin{theorem} 
\label{thm:mainexactodd2} 
Fix $a\ge 1.$  
The expected $a$-total movement of algorithm $MV_1\left(n,\frac{1}{\lambda}\right)$
is respectively \footnote{The Gamma function $\Gamma(z)$ is the extension of the factorial to positive real number arguments. When $a$ is an even natural number
we have $\Gamma\left(\frac{a}{2}+2\right)=\left(\frac{a}{2}+1\right)!.$}
$$
\frac{a!}{2^{\frac{a}{2}}\Gamma\left(\frac{a}{2}+2\right)}\frac{{n^{1+\frac{a}{2}}}}{\lambda^a} +\frac{O\left({n^{\frac{a}{2}}}\right)}{\lambda^a},
$$
\end{theorem}
We also prove the following tight bound for $1-$total movement of Algorithm \ref{alg_one} when $s=\frac{1+\epsilon}{\lambda}$ and $\epsilon>0.$
\begin{theorem} 
\label{thm:mainexact_epsilon} 
Let $\epsilon>0$ be a constant independent on $n$ and $\lambda.$
Then the expected $1-$total movement
of algorithm $MV_1\left(n,\frac{1+\epsilon}{\lambda}\right)$
is respectively 
$
\frac{\Theta\left(n^{2}\right)}{\lambda}.
$
\end{theorem}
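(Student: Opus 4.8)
The plan is to reduce the $a=1$ case to a direct computation of $\sum_{i=2}^n D^{(1)}_i$ where $D^{(1)}_i = \int_0^\infty |t - t_{i-1}| f_{i-1,n}(t)\,dt$ with $t_{i-1} = (1+\epsilon)\frac{i-1}{n}$. Unlike the even-$a$ case, the absolute value cannot be dropped globally because the target position $t_{i-1}$ exceeds the mean $\frac{i-1}{n}$ of $X_{i-1}$ by a factor $(1+\epsilon)$ but the Gamma density still has mass on both sides of $t_{i-1}$. So the first step is to split the integral at $t_{i-1}$:
\[
D^{(1)}_i = \int_0^{t_{i-1}}(t_{i-1}-t)f_{i-1,n}(t)\,dt + \int_{t_{i-1}}^{\infty}(t-t_{i-1})f_{i-1,n}(t)\,dt.
\]

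**Next I would** observe that since $\int_0^\infty (t-t_{i-1})f_{i-1,n}(t)\,dt = \frac{i-1}{n} - t_{i-1} = -\epsilon\frac{i-1}{n}$, the two pieces are related by $\int_{t_{i-1}}^\infty (t-t_{i-1})f = -\epsilon\frac{i-1}{n} + \int_0^{t_{i-1}}(t_{i-1}-t)f$, so $D^{(1)}_i = 2\int_0^{t_{i-1}}(t_{i-1}-t)f_{i-1,n}(t)\,dt - \epsilon\frac{i-1}{n}$. For the \textbf{lower bound} $D^{(1)}_i = \Omega(1)$ uniformly in $i$ (giving $\sum_i D^{(1)}_i = \Omega(n)$), it suffices to use $D^{(1)}_i \ge \int_{t_{i-1}}^\infty (t - t_{i-1}) f_{i-1,n}(t)\,dt - $ nothing; more cleanly, by Jensen's inequality (\ref{eq:jensen}) applied to the convex function $x \mapsto |x|$ one has $D^{(1)}_i \ge |\mathbf{E}[X_{i-1}] - t_{i-1}| = \epsilon\frac{i-1}{n}$, and summing over $i$ via (\ref{id:sumowanie}) with $f=1$ gives $\sum_{i=2}^n D^{(1)}_i \ge \epsilon \sum_{i=2}^n \frac{i-1}{n} = \frac{\epsilon}{2}(n-1) = \Omega(n)$. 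For the \textbf{upper bound}, I would use the trivial bound $D^{(1)}_i = \mathbf{E}[|X_{i-1} - t_{i-1}|] \le \mathbf{E}[X_{i-1}] + t_{i-1} = (2+\epsilon)\frac{i-1}{n}$ from the right half of (\ref{eq:first101}) with $Z = X_{i-1}$, $q = t_{i-1}$; summing gives $\sum_{i=2}^n D^{(1)}_i \le (2+\epsilon)\sum_{i=2}^n \frac{i-1}{n} = O(n)$.

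**The only subtle point** is that the lower bound via Jensen gives exactly the asymptotically correct rate $\Omega(n)$ with no loss, so there is in fact no real obstacle here — the argument is short once one notices that $\epsilon$ separates the target from the mean by a $\Theta(1/n \cdot i)$ amount. Alternatively, and perhaps more in the spirit of an exact evaluation, one can compute $D^{(1)}_i$ in closed form using the incomplete-Gamma integral identity (\ref{integral_3}) (for $b=0$ and $b=1$) together with the splitting above, obtaining $D^{(1)}_i = \frac{2(i-1)}{n}e^{-(1+\epsilon)(i-1)}\frac{((1+\epsilon)(i-1))^{i-1}}{(i-1)!} \cdot(\text{tail correction}) - \epsilon\frac{i-1}{n} + \cdots$; but for the $\Theta(n)$ claim this precision is unnecessary, and I would present only the two-sided estimate. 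Thus the proof is: lower bound by Jensen's inequality and (\ref{id:sumowanie}), upper bound by the triangle inequality (\ref{eq:first101}) and (\ref{id:sumowanie}), concluding $\sum_{i=2}^n D^{(1)}_i = \Theta(n)$.
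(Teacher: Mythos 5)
Your proof is correct and is essentially the paper's own argument: the paper applies Inequality (\ref{eq:first101}) (whose left half it explicitly identifies as Jensen's inequality for $|x|$) to get exactly your two-sided bound $\epsilon\frac{i-1}{n}\le D^{(1)}_i\le(2+\epsilon)\frac{i-1}{n}$ and then sums over $i$. The initial splitting of the integral at $t_{i-1}$ in your write-up is never actually used and could be omitted.
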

\begin{proof}
 Let $D^{(1)}_i$ be the expected distance between $X_i-X_1$ and the $i-$th sensor position, $t_{i-1}=(1+\epsilon)\frac{i-1}{\lambda}$ for $i=2,3,\dots ,n,$  hence given by
$$
D^{(1)}_i=\int_{0}^{\infty}|t-t_{i-1}|f_{i-1,\lambda}(t)dt.
$$
Let us recall that $\mathbf{E}\left[X_i-X_1\right]=\frac{i-1}{\lambda}$ (see (\ref{integral_2}) for $l=i-1$).
Applying Inequality (\ref{eq:first101}) for $Z=X_i-X_1,  q=t_{i-1}$ and $\epsilon>0$ we have
$$
\epsilon\frac{i-1}{\lambda}
\le D^{(1)}_i \le(2+\epsilon)\frac{i-1}{\lambda}.
$$
Since $\sum_{i=2}^{n}(i-1)=\frac{(n-1)n}{2},$ we have
\begin{equation}
\label{eq:delta102} 
\sum_{i=2}^{n}D^{(1)}_i =\frac{\Theta\left(n^{2}\right)}{\lambda}
\end{equation}
This completes the proof of Theorem \ref{thm:mainexact_epsilon}. 
\end{proof}

\subsection{Expected Minimal $a-$total Movement for $s=v=\frac{1}{\lambda}$}
\label{subsection:tight}
In this subsection we look at the expected minimal $a-$total movement when the interference-connectivity distances $s$ and $v$ are equal to $\frac{1}{n}.$
We prove the upper bound $\frac{O\left(n^{1+\frac{a}{2}}\right)}{\lambda^a}$ on the expected minimal $a-$total movement (see Theorem \ref{thm:mainexactfract}).
Our Theorem \ref{thm:mainexactfractlower} and Theorem \ref{thm:mainexactfractlower1a} give the lower bound $\frac{\Omega\left(n^{1+\frac{a}{2}}\right)}{\lambda^a}$ on the expected minimal $a-$total movement.

We begin with a theorem which indicates how to apply the results of Theorem \ref{thm:mainexactodd} to the upper bound on the expected $a-$total movement, when $a\ge 1.$
In the proof of Theorem \ref{thm:mainexactfract}, we combine together discrete H\"older inequality, Jensen's inequality and the asymptotic result of Theorem
\ref{thm:mainexactodd}.
\begin{theorem} 
\label{thm:mainexactfract} 
Let $a\ge 1.$ 
The expected $a-$total movement
of algorithm $MV_1\left(n,\frac{1}{\lambda}\right)$
is respectively $\frac{O\left(n^{1+\frac{a}{2}}\right)}{\lambda^a}.$
\end{theorem}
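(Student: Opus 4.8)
The plan is to reduce the case of arbitrary $a>0$ to the case of even natural numbers, which was already handled in Theorem \ref{thm:mainexactodd}, by a monotonicity/Jensen argument applied termwise. Recall that in algorithm $MV_1(n,\frac{1}{n})$ the displacement of the $i$-th sensor is $|X_i - X_1 - \frac{i-1}{n}| = |X_{i-1} - \frac{i-1}{n}|$ (using $X_i - X_1 \stackrel{d}{=} X_{i-1}$, the Gamma r.v.\ with density $f_{i-1,n}$), and the quantity we must bound is $\sum_{i=2}^n D_i^{(a)}$ where $D_i^{(a)} = \mathbf{E}\big[|X_{i-1}-\tfrac{i-1}{n}|^a\big]$.

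First I would fix an even natural number $a'$ with $a' \ge a$ (for instance $a' = 2\lceil a/2\rceil$). For the random variable $W_i := |X_{i-1}-\frac{i-1}{n}|^{a}$ and the exponent $p := a'/a \ge 1$, Jensen's inequality (\ref{eq:jensen}) with the convex function $x \mapsto |x|^p$ gives $\big(\mathbf{E}[W_i]\big)^{p} \le \mathbf{E}[W_i^{p}]$, i.e. $D_i^{(a)} = \mathbf{E}[W_i] \le \big(\mathbf{E}[W_i^{p}]\big)^{1/p} = \big(D_i^{(a')}\big)^{a/a'}$. Summing over $i$ and applying the power-mean / concavity inequality $\sum_i x_i^{a/a'} \le (n-1)^{1-a/a'}\big(\sum_i x_i\big)^{a/a'}$ (again Jensen, for the concave map $x\mapsto x^{a/a'}$ on the $n-1$ summands), I obtain
$$
\sum_{i=2}^n D_i^{(a)} \;\le\; (n-1)^{1-\frac{a}{a'}}\left(\sum_{i=2}^n D_i^{(a')}\right)^{\frac{a}{a'}}.
$$
By Theorem \ref{thm:mainexactodd} applied with the even exponent $a'$ and $\epsilon_1 = 0$, the inner sum is $\Theta\big(n^{1-a'/2}\big)$, in particular $O\big(n^{1-a'/2}\big)$. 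Substituting, the right-hand side is $O\Big(n^{1-\frac{a}{a'}}\cdot n^{\frac{a}{a'}(1-a'/2)}\Big) = O\big(n^{1 - \frac{a}{2}}\big)$, which is exactly the claimed bound.

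The only place that needs care — and the main (mild) obstacle — is checking that the power-mean step is applied correctly: the exponent $a/a'$ lies in $(0,1]$, so $x\mapsto x^{a/a'}$ is concave on $[0,\infty)$ and Jensen's inequality (\ref{eq:jensen}) applies in the direction $\frac{1}{n-1}\sum_i x_i^{a/a'} \le \big(\frac{1}{n-1}\sum_i x_i\big)^{a/a'}$, which rearranges to the displayed inequality; all the $D_i^{(a')}$ are finite (the relevant integral $\int_0^\infty t^b f_{l,n}(t)\,dt$ converges, cf.\ (\ref{integral_3}) and (\ref{integral_2})), so the expectations exist and Jensen is legitimate. When $a$ is itself an even natural number one simply invokes Theorem \ref{thm:mainexactodd} directly and there is nothing to prove; the argument above is only needed for non-even $a$. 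This completes the plan.
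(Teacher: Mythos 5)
Your proposal is correct and follows essentially the same route as the paper: both combine a termwise Jensen inequality relating $D_i^{(a)}$ to $D_i^{(b)}$ for an even exponent $b\ge a$ with a H\"older/power-mean inequality on the sum, and then invoke Theorem \ref{thm:mainexactodd} with $\epsilon_1=0$; your power-mean step is just the paper's discrete H\"older inequality with the second sequence identically $1$, applied in the opposite order.
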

\begin{proof}
Assume that $a\ge 1.$ 
Let $D^{(a)}_i$ be the expected distance to the power $a$ between $X_i-X_1$ and the $i^{th}$ sensor position. 
Let $b$ be the even natural number such that $b-a>0.$
Then we use discrete H\"older inequality with parameters $\frac{b}{a}$ and $\frac{b}{b-a}$
and get
\begin{eqnarray}
\sum_{i=2}^{n}D^{(a)}_i
&\le& \notag
\left(\sum_{i=2}^{n}\left(D^{(a)}_i\right)^{\frac{b}{a}}\right)^{\frac{a}{b}}
\left(\sum_{i=2}^{n}1\right)^{\frac{b-a}{b}  }\\
&=&  \label{eq:holder1}
\left(\sum_{i=2}^{n}\left(D^{(a)}_i\right)^{\frac{b}{a}}\right)^{\frac{a}{b}}
(n-1)^{\frac{b-a}{b}}.
\end{eqnarray}
Next we use Jensen's inequality (see (\ref{eq:jensen})) for $f(x)=x^{\frac{b}{a}}$ and $\mathbf{E}[X]=D^{(a)}_i$
and get
\begin{equation}
\label{eq:jensen_b2}
\left(D^{(a)}_i\right)^{\frac{b}{a}}\le D^{(b)}_i.
\end{equation}
Combining  together (\ref{eq:holder1}), (\ref{eq:jensen_b2}) and Theorem \ref{thm:mainexactodd} we deduce that
$$
\sum_{i=2}^{n}D^{(a)}_i\le
\left(\Theta\left(\frac{n^{1+\frac{b}{2}}}{\lambda^{b}}\right)\right)^{\frac{a}{b}}(n-1)^{\frac{b-a}{b}}=\frac{\Theta\left(n^{1+\frac{a}{2}}\right)}{\lambda^a}.
$$
This is enough to prove the upper bound which finishes the proof of Theorem \ref{thm:mainexactfract}.
\end{proof}
We now prove the desired lower bound for expected $1-$total movement.
\begin{theorem} 
\label{thm:mainexactfractlower} 
Any sensor's displacement algorithm which solves $\left(\frac{1}{\lambda},\frac{1}{\lambda}\right)-IP$ problem
requires expected $1-$total movement
of at least
$\frac{\Omega\left(n^{\frac{3}{2}}\right)}{\lambda}.$ 
\end{theorem}
\begin{proof}
Before providing the proof of the theorem we make two important observations.

Let $X_1<X_2<\dots <X_n$ be the initial positions of the sensors.
Recall that by the monotonicity lemma no sensor $X_i$ is ever placed before sensor $X_j,$ for all $i<j.$

We assume that the final position of the first sensor is
the nonnegative random variable $Z$ with $\mathbf{E}[Z]<\infty$.

We are now ready to prove the theorem.
Let $X_i$ be the arrival times of the $i-$th event in Poisson process with arrival rate $\lambda.$ Let $t_i=\frac{i}{\lambda},$ for $i=1,2,\dots,n.$
Putting together Theorem \ref{thm:mainexactodd2} and Equation (\ref{eq:probadens}) for $j=1, l=i-1$ we have the following tight asymptotic result
\begin{equation}
\label{eq:cocon}
\sum_{i=1}^{n-1}\mathbf{E}\left[|X_i-t_i|\right]=C_1\frac{n^{\frac{3}{2}}}{\lambda}+\frac{O\left(n^{\frac{1}{2}}\right)}{\lambda},
\end{equation}
where $C_1=(\sqrt{2}\Gamma(5/2))^{-1}.$ 
There are two cases to consider.

\textbf{Case 1.} The algorithm moves the sensor $X_i$ to the position $Z+b_i,$ where  $b_i=\frac{1}{\lambda}(i-1),$ for $i=1,2,\dots n$ and 
$\mathbf{E}[Z]>\frac{1}{2}C_1\frac{n^{\frac{1}{2}}}{\lambda}.$

Combining together Inequality (\ref{eq:first101}) 
for $Z:=X_i-Z$,  $q = b_i$ Equation (\ref{integral_2}) for $l=i$
and the triangle inequality we get
\begin{align*}
\sum_{i=1}^n\mathbf{E}\left[\left|X_{i}-Z-b_i\right|\right]&\ge \sum_{i=1}^n\left|\mathbf{E}[Z]-\frac{1}{\lambda}\right|\\
&\ge\sum_{i=1}^n\left(\left|\frac{1}{2}C_1\frac{n^{\frac{1}{2}}}{\lambda}\right|-\left|\frac{1}{\lambda}\right|\right)=\frac{\Theta\left(n^{\frac{3}{2}}\right)}{\lambda}.
\end{align*}
This is enough to prove the first case.

\textbf{Case 2.} The algorithm moves the sensor $X_i$ to the position $Z+b_i,$ where  $b_i=\frac{1}{\lambda}(i-1),$ for $i=1,2,\dots n$ and $\mathbf{E}[Z]\le\frac{1}{2}C_1\frac{\sqrt{n}}{\lambda}.$

Let us recall that $t_i=\frac{i}{\lambda},$ for $i=1,2,\dots,n.$
Using the triangle inequality
$$|X_{i}-t_{i}|\le|X_{i}-(Z+b_i)|+|(Z+b_i)-t_{i}|$$
we get
$$
\sum_{i=1}^n\mathbf{E}\left[|X_{i}-(Z+b_i)|\right]
\ge\sum_{i=1}^n\mathbf{E}\left[|X_{i}-t_{i}|\right]-\sum_{i=1}^n\mathbf{E}\left|Z-\frac{1}{\lambda}\right|.
$$
Combining together Equation (\ref{eq:cocon}), assumption $\mathbf{E}[Z]\le\frac{1}{2}C_1\frac{\sqrt{n}}{\lambda}$ and the triangle inequality
$\mathbf{E}\left[\left|Z-\frac{1}{n}\right|\right]\le \mathbf{E}[Z]+\frac{1}{\lambda}$  we have
$$\sum_{i=1}^n\mathbf{E}\left[|X_{i}-(Z+b_i)|\right]\in\frac{\Omega\left(n^{\frac{3}{2}}\right)}{\lambda}.$$
This is enough to prove the second case
and sufficient to complete the proof of Theorem \ref{thm:mainexactfractlower}.
\end{proof}
We now apply Theorem \ref{thm:mainexactfractlower} in order to derive the lower bound on the expected $a-$total movement when $a>1.$
Let us note that the proof of Theorem \ref{thm:mainexactfractlower1a}  is analogous to the proof of Theorem \ref{thm:mainexactfract} .
\begin{theorem} 
\label{thm:mainexactfractlower1a} 
Let $a> 1.$ 
Then any sensor's displacement algorithm which solves $\left(\frac{1}{\lambda},\frac{1}{\lambda}\right)-IP$ problem
requires expected $a-$total movement
of at least
$\frac{\Omega\left(n^{1+\frac{a}{2}}\right)}{\lambda^a}.$ 
\end{theorem}

\subsection{Expected Minimal $a-$total Movement for $\frac{1}{\lambda}<s\le v$}
\label{subsection:linear}
In this subsection we study the expected minimal $a-$total movement when the interference-connectivity distances $s$ and $v$ are greater than $\frac{1}{\lambda}.$
We give the upper bound $\frac{O\left(n^{1+a}\right)}{\lambda^a}$ on the expected $a-$total movement, when $a> 1$ (see Theorem \ref{thm:mainexactfract_epsilon})
and the lower bound $\frac{\Omega\left(n^{1+a}\right)}{\lambda^a}$  on the expected $a-$total movement, when $a\ge 1$ (see Theorem \ref{thm:mainexactfractlower_eps} and Theorem \ref{thm:mainexactfractlower_eps1}).

We begin with a theorem which indicates how to apply the results of Theorem \ref{thm:mainexactodd} to the upper bound on the expected $a-$total movement, when $a>1.$
\begin{theorem} 
\label{thm:mainexactfract_epsilon} 
Let $\epsilon>0$ be a constant independent on $n$ and $\lambda.$ Let $a>1.$ 
The expected $a-$total movement of algorithm $MV_1\left(n,\frac{1+\epsilon}{\lambda}\right)$
is in
$
\frac{O\left(n^{1+a}\right)}{\lambda^a}. 
$
\end{theorem}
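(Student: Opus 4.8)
The plan is to mirror the proof of Theorem~\ref{thm:mainexactfract} almost verbatim, with the interference distance $\frac{1}{n}$ replaced by $\frac{1+\epsilon}{n}$ and the estimate of Theorem~\ref{thm:mainexactodd} now invoked in the regime $\epsilon_1>0$. For $i=2,3,\dots,n$ write $t_{i-1}=(1+\epsilon)\frac{i-1}{n}$ for the target position (relative to $X_1$) of the $i$-th sensor under $MV_1\!\left(n,\frac{1+\epsilon}{n}\right)$, and set
$$D^{(a)}_i=\int_{0}^{\infty}|t-t_{i-1}|^{a}f_{i-1,n}(t)\,dt=\mathbf{E}\left[\left|X_i-X_1-t_{i-1}\right|^{a}\right].$$
Since the first sensor does not move, the expected $a$-total movement of the algorithm equals $\sum_{i=2}^{n}D^{(a)}_i$, and it suffices to bound this sum by $O(n)$.

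First I would fix an even natural number $b$ with $b>a$, so that $\tfrac{b}{a}$ and $\tfrac{b}{b-a}$ form a pair of conjugate H\"older exponents; applying the discrete H\"older inequality exactly as in (\ref{eq:holder1}) gives
$$\sum_{i=2}^{n}D^{(a)}_i\le\left(\sum_{i=2}^{n}\left(D^{(a)}_i\right)^{\frac{b}{a}}\right)^{\frac{a}{b}}(n-1)^{\frac{b-a}{b}}.$$
Because $\tfrac{b}{a}>1$, the map $x\mapsto x^{b/a}$ is convex, so Jensen's inequality (\ref{eq:jensen}) applied to the nonnegative random variable $\left|X_i-X_1-t_{i-1}\right|^{a}$ yields $\left(D^{(a)}_i\right)^{b/a}\le D^{(b)}_i$, just as in (\ref{eq:jensen_b2}). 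Now $b$ is even, so Theorem~\ref{thm:mainexactodd} applied with $\epsilon_1=\epsilon>0$ gives $\sum_{i=2}^{n}D^{(b)}_i=\Theta(n)$; substituting this back produces
$$\sum_{i=2}^{n}D^{(a)}_i\le\bigl(\Theta(n)\bigr)^{\frac{a}{b}}(n-1)^{\frac{b-a}{b}}=\Theta\!\left(n^{\frac{a}{b}+\frac{b-a}{b}}\right)=\Theta(n),$$
which is the claimed $O(n)$ bound.

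I do not expect any genuine obstacle, since this is a routine adaptation of Theorem~\ref{thm:mainexactfract}; the only points needing care are that an even $b>a$ always exists, that $\tfrac{b}{a}$ and $\tfrac{b}{b-a}$ are legitimate H\"older conjugates with $\tfrac{b}{a}>1$ (so that the convexity behind (\ref{eq:jensen}) and (\ref{eq:jensen_b2}) is available), and --- the one real change from the $s=v=\tfrac1n$ case --- that the governing branch of Theorem~\ref{thm:mainexactodd} is now the $\epsilon_1>0$ branch with order $\Theta(n)$ rather than $\Theta(n^{1-b/2})$. As a cross-check one can also dispatch the subrange $a\in(0,1]$ directly: by concavity of $x\mapsto x^{a}$ together with Jensen, $D^{(a)}_i\le\left(\mathbf{E}\left[\left|X_i-X_1-t_{i-1}\right|\right]\right)^{a}\le\bigl((2+\epsilon)\tfrac{i-1}{n}\bigr)^{a}\le(2+\epsilon)^{a}$, using (\ref{eq:first101}) and $\tfrac{i-1}{n}\le1$, whence $\sum_{i=2}^{n}D^{(a)}_i\le(2+\epsilon)^{a}(n-1)=O(n)$; but the H\"older route above covers all $a>0$ uniformly and is the one I would present.
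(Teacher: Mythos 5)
Your proposal is correct and follows essentially the same route as the paper: the paper's proof of this theorem likewise defines $D^{(a)}_i$, picks an even $b>a$, applies the discrete H\"older inequality with exponents $\tfrac{b}{a}$ and $\tfrac{b}{b-a}$, uses Jensen to get $\left(D^{(a)}_i\right)^{b/a}\le D^{(b)}_i$, and invokes the $\epsilon_1>0$ branch of Theorem~\ref{thm:mainexactodd} to conclude $\left(\Theta(n)\right)^{a/b}(n-1)^{(b-a)/b}=\Theta(n)$. Your supplementary direct argument for $a\in(0,1]$ is a valid bonus but not needed.
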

We can now prove the desired lower bound for expected $1-$total movement. We note that the proof of Theorem \ref{thm:mainexactfractlower_eps}   is analogous to the proof of Theorem \ref{thm:mainexactfractlower}.
\begin{theorem} 
\label{thm:mainexactfractlower_eps} 
Fix $\tau\ge \epsilon>0$ independent on $n$ and $\lambda.$
Then any sensor's displacement algorithm 
which solves $\left(\frac{1+\epsilon}{\lambda},\frac{1+\tau}{\lambda}\right)-IP$ problem
requires
expected $1-$total movement
of at least $\frac{\Omega\left(n^2\right)}{\lambda}.$ 
\end{theorem}
We now apply Theorem \ref{thm:mainexactfractlower_eps} in order to derive the lower bound on the expected $a-$total movement when $a>1.$
\begin{theorem} 
\label{thm:mainexactfractlower_eps1} 
Fix $\tau\ge \epsilon>0$ independent on $n$ and $\lambda.$ Let $a> 1.$ 
Then any sensor's displacement algorithm 
which solves $\left(\frac{1+\epsilon}{\lambda},\frac{1+\tau}{\lambda}\right)-IP$ problem
requires
expected $a-$total movement
of at least $\Omega(n).$ 
\end{theorem}
\subsection{Expected Minimal $a-$total Movement for $s<\frac{1}{\lambda}$ and $v>\frac{1}{\lambda}$}
\label{subsection:sublinear}
In this subsection we give algorithm $I_1(n,s,v)$ (see Algorithm \ref{alg_interference}) to solve $(s,v)-IP$ interference-connectivity problem.
Let $\epsilon, \tau>0$ be constants independent on $n$ and $\lambda.$ Let $a\ge 1.$
We show that expected $a-$total movement of algorithm
$I_1\left(n,\frac{1-\epsilon}{\lambda},\frac{1+\tau}{\lambda}\right)$
is in
$\frac{O\left(n\right)}{\lambda^a}.$
\begin{algorithm}[H]
\caption{$I_1(n,s,v)$ Moving sensors in the $[0,\infty)$; $0<s<v.$} 
\label{alg_interference}
\begin{algorithmic}[1]
 \REQUIRE The initial location $X_1\le X_2\le \dots \le X_n$ of the $n$ sensors in the $[0,\infty)$ according to Poisson process with arrival rate $\lambda.$ 
 \ENSURE  The final positions of the sensors such that\\ {\large$\forall_{i=2,3\dots ,n}$}
 $v\ge X_i-X_{i-1}\ge s$ (so as that the distance between consecutive sensors is less or equal $v$ and greater or equal to $s$).
 \FOR{$i=2$  \TO $n$ } 
 \IF{$X_i-X_{i-1}<s$}
 \STATE{move left-to-right the sensor $X_i$ at the new position $s+X_{i-1};$}
 \ELSIF{$X_i-X_{i-1}>v$}
 \STATE{move right-to-left the sensor $X_i$ at the new position $v+X_{i-1};$}
 \ELSE
 \STATE{do nothing;}
 \ENDIF
 \ENDFOR
\end{algorithmic}
\end{algorithm}
The next theorem gives the desired upper bound. 
\begin{theorem}
\label{thm:interfere} 
Fix $\epsilon,\tau>0$ independent on $n$ and $\lambda.$ Let $a\ge 1.$ 
The expected $a-$total movement of algorithm $I_1\left(n,\frac{1-\epsilon}{\lambda},\frac{1+\tau}{\lambda}\right)$
is in
$\frac{O\left(n\right)}{\lambda^a}.$
\end{theorem}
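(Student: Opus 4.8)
The plan is to bound the expected $a$-total movement of Algorithm~\ref{alg_interference} by splitting the displacement of each sensor into the contributions coming from the two ``bad'' events $X_i-X_{i-1}<s$ and $X_i-X_{i-1}>v$, and then to compare these pointwise displacements against the anchor-based quantities controlled by Lemma~\ref{lemma_d} and Lemma~\ref{lemma_d23}. First I would describe precisely how the algorithm moves the $i$th sensor: after the loop, the sensor originally at $X_i$ has been shifted to some final position, and its total displacement is at most the accumulated discrepancy between the gaps $X_j-X_{j-1}$ and the clamped gaps over all $j\le i$. Because the algorithm clamps each consecutive gap into $[s,v]$ independently (the correction applied at step $i$ is exactly $|s-(X_i-X_{i-1})|^{+}$ when the gap is too small and $|(X_i-X_{i-1})-v|^{+}$ when it is too large, and these propagate rigidly to all later sensors), the displacement of the $i$th sensor is bounded by $\sum_{j\le i}\bigl(|s-(X_j-X_{j-1})|^{+}+|(X_j-X_{j-1})-v|^{+}\bigr)$, with $X_0:=X_1$.

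Next I would pass to the $a$-th power and sum over $i$. Using subadditivity/convexity of $t\mapsto t^a$ (for $a\ge 1$ via the power-mean inequality, for $0<a\le 1$ via $(\sum t_k)^a\le\sum t_k^a$, applied after an appropriate normalization so that in both regimes one gets a clean bound — here I would invoke Jensen's inequality~(\ref{eq:jensen}) exactly as in the proof of Theorem~\ref{thm:mainexactfract}), the contribution of sensor $i$ to $M_a$ is controlled, after taking expectations, by a constant times $\sum_{j=1}^{i}\mathbf{E}\bigl[(|s-(X_j-X_{j-1})|^{+})^a\bigr]+\mathbf{E}\bigl[(|(X_j-X_{j-1})-v|^{+})^a\bigr]$ times a polynomial factor in $i$ or $n$ absorbing the number of summands. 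Summing over $i=2,\dots,n$ and reorganizing, the key point is that the term $\mathbf{E}\bigl[(|s l - X_l|^{+})^a\bigr]$ (and its $v$-counterpart) gets weight roughly $\frac{n}{l}$ after one interchanges the order of summation and bounds the number of sensors affected by the $l$th gap. This is precisely the shape $\sum_{l=1}^{n}\frac{n}{l}\mathbf{E}\bigl[(|sl-X_l|^{+})^a\bigr]$ appearing in Lemma~\ref{lemma_d}, and $\sum_{l=1}^{n}\frac{n}{l}\mathbf{E}\bigl[(|X_l-vl|^{+})^a\bigr]$ in Lemma~\ref{lemma_d23}; note that $X_j-X_{j-1}$ is distributed as $X_1$, but the cumulative gap $X_i-X_{i-l}\stackrel{d}{=}X_l$, so it is the partial sums that carry the Gamma$(l,n)$ law, which is why the weighting is by $\frac{n}{l}$ rather than uniform.

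Once the sum is reduced to the two quantities estimated in Lemma~\ref{lemma_d} and Lemma~\ref{lemma_d23}, each is $O(n^{1-a})$, and adding them gives the claimed bound $O(n^{1-a})$ for the expected $a$-total movement. The main obstacle I anticipate is the bookkeeping in the second paragraph: making rigorous the claim that the displacement of the $i$th sensor telescopes into a sum of per-gap corrections, and then correctly accounting for the combinatorial factor (the number of sensors downstream of a given corrected gap) so that the weights come out as exactly $\frac{n}{l}$ rather than something larger — a naive bound would give an extra factor of $n$ and destroy the estimate. The handling of the regime $0<a<1$ versus $a\ge1$ in the power inequality is a minor but necessary technical split, dispatched the same way as in Theorem~\ref{thm:mainexactfract} using discrete Hölder and Jensen.
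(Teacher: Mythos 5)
There is a genuine gap, and it sits exactly where you flagged the ``main obstacle'': the per-gap decomposition in your first paragraph is too lossy to ever yield $O(n^{1-a})$. Your bound $|\delta_i|\le\sum_{j\le i}\bigl(|s-(X_j-X_{j-1})|^{+}+|(X_j-X_{j-1})-v|^{+}\bigr)$ is correct (the clamping map is $1$-Lipschitz, so the corrections accumulate additively), but each individual gap $X_j-X_{j-1}$ is Exp$(n)$, so $\mathbf{E}\bigl[|s-(X_j-X_{j-1})|^{+}\bigr]=\Theta(1/n)$ is \emph{not} small: the right-hand side has expectation $\Theta(i/n)$, and already for $a=1$ summing over $i$ gives $\Theta(n)$ rather than the claimed $O(1)$. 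The whole point of the estimate is the cancellation inside a run of consecutively moved sensors: if a maximal block of moved sensors starts right after an unmoved anchor, the $l$-th sensor of that block is displaced by the \emph{cumulative} discrepancy $|sl-X_l|^{+}$ (or $|X_l-vl|^{+}$), whose positive part is exponentially small in $l$ because $sl$ drifts below (resp.\ $vl$ above) the mean $l/n$ of the Gamma$(l,n)$ variable. Once you have replaced the sum of positive parts by the positive part of the sum, there is no way to ``reorganize'' back: the inequality $|sl-X_l|^{+}\le\sum_{j\le l}|s-(X_j-X_{j-1})|^{+}$ goes in the wrong direction, so the quantities $\mathbf{E}\bigl[(|sl-X_l|^{+})^a\bigr]$ controlled by Lemmas~\ref{lemma_d} and~\ref{lemma_d23} simply never appear from your decomposition. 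The same problem recurs when you raise to the power $a$: for $a\ge1$ the bound $(\sum_{j=1}^{i}t_j)^a\le i^{a-1}\sum_j t_j^a$ introduces the polynomial factor you mention, and nothing in the proposal removes it.

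What is missing is the block argument that the paper uses: partition the moved sensors into maximal alternating-direction blocks, anchor each block at the final position of the sensor preceding it, and observe that within a block the displacement of the $l$-th sensor is bounded by $(|sl-X_l|^{+})^a$ or $(|X_l-vl|^{+})^a$ via the stationarity $X_{i+l}-X_i\stackrel{d}{=}X_l$; then the combinatorial count that a term with index $l$ can occur at most $n/l$ times (blocks of length at least $l$ are disjoint, hence at most $n/l$ of them) produces the weights $\frac{n}{l}$ in Lemmas~\ref{lemma_d} and~\ref{lemma_d23}. Your closing remark that ``the partial sums carry the Gamma$(l,n)$ law'' shows you see the right object, but it contradicts the decomposition you actually set up; as written, the argument proves only the trivial bound $O(n)$ for $a=1$ and does not establish the theorem. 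The Jensen/H\"older device from Theorem~\ref{thm:mainexactfract} is also not needed here, since the two lemmas are already stated for all $a>0$.
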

The general strategy of our proof of Theorem \ref{thm:interfere} is the following. Firstly we estimate the expected $a-$total movement of algorithm $I_1(n,s,v)$ by the sum
$$
\sum_{l=1}^{n}\frac{n}{l} \mathbf{E}\left[(|sl-X_{l}|^{+})^a\right]+\sum_{l=1}^{n}\frac{n}{l} \mathbf{E}\left[(|X_{l}-vl|^{+})^a\right].\footnote{This estimation is valid for general class of distribution with property
(\ref{eq:probadens}).} 
$$
Then the result of Theorem \ref{thm:interfere} is a consequence of the following lemma.
\begin{Lemma}
 \label{lemma_d}
 Fix $\epsilon>0$ independent on $n$ and $\lambda.$
Let $a \ge 1$ and let $s=\frac{1-\epsilon}{\lambda},$ $v=\frac{1+\tau}{\lambda}.$ Assume that random variable $X_l$ obeys Gamma distribution with parameters $l\in\mathbf{N}\setminus\{0\}$ and $\lambda>0.$
Then
\begin{equation}
\label{eq:lem_d1}
\sum_{l=1}^{n}\frac{n}{l} \mathbf{E}\left[(|sl-X_{l}|^{+})^a\right]=\frac{O\left(n\right)}{\lambda^a},
\end{equation}
\begin{equation}
\label{eq:lem_d2}
\sum_{l=1}^{n}\frac{n}{l} \mathbf{E}\left[(|X_{l}-vl|^{+})^a\right]=\frac{O\left(n\right)}{\lambda^a}.
\end{equation}
\end{Lemma}
\section{Sensors in the Higher Dimension.}
\label{chap:higher}
Fix $d\in\mathbb{N}\setminus\{0,1\}$ and $a\ge 1.$ Let $n=m^d$ for some $m\in\mathbb{N}.$

Let us recall that $\epsilon, \tau>0$ are arbitrary small constants independent on $n$ and $\lambda.$

We define our random placement and movement as follows.

\begin{Definition}[reallocation in $[0,\infty)^d$]
\label{def:atotek}
Consider $n$ sensors that are randomly placed  in the hyperoctant $[0,\infty)^d$ according to $d$ identical and independent Poisson processes
$X^{(1)}_i, X^{(2)}_i,\dots,X^{(d)}_i,$ for $i=1,2,\dots, n^{1/d}$
each with arrival rate $\lambda.$ 
\begin{itemize}
 \item The position of a sensor in the $\mathbb{R^d_+}$ is determined by the $d$ coordinates
$(X^{(1)}_{i_1}, X^{(2)}_{i_2},\dots,X^{(d)}_{i_d}),$ where $1\le i_1,i_2,\dots , i_d\le n^{1/d}.$
\item We have initially $n^{(d-1)/d}$ rows and $n^{(d-1)/d}$ columns such that each column and each row has $n^{1/d}$ random sensors.
\end{itemize}
Then, we reallocate the random sensors so as:
\begin{enumerate}
\item[(a)] the sensors move  only along the axes,
 \item [(b)] we have finally  $n^{(d-1)/d}$ rows and $n^{(d-1)/d}$ columns such that each column and each row has $n^{1/d}$ sensors,
 \item[(c)] in each column and in each row the sensors satisfy $(s,v)-IP$ interference-connectivity requirement.
\end{enumerate}
\end{Definition}
Figure 1 illustrates our initial random displacement of sensors in two dimensions.
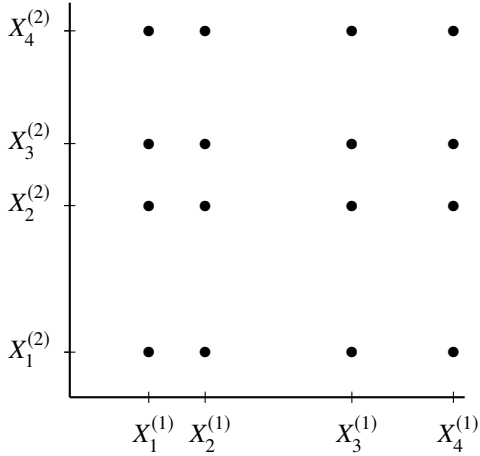
\begin{figure}[H]
\label{rand:pr}
\setlength{\unitlength}{0.75mm}
\centering
\begin{picture}(80,80)(-10,-10)

\put(0,0){\thicklines{\line(0,1){70}}}
\put(14,8){\circle*{2}}

\put(14,34){\circle*{2}}

\put(14,45){\circle*{2}}

\put(14,65){\circle*{2}}

\put(24,8){\circle*{2}}

\put(24,34){\circle*{2}}

\put(24,45){\circle*{2}}

\put(24,65){\circle*{2}}

\put(50,8){\circle*{2}}

\put(50,34){\circle*{2}}

\put(50,45){\circle*{2}}

\put(50,65){\circle*{2}}

\put(68,8){\circle*{2}}

\put(68,34){\circle*{2}}

\put(68,45){\circle*{2}}

\put(68,65){\circle*{2}}

\put(0,8){\line(-1,0){1}}
\put(1,8){\line(-1,0){2}}
\put(-11,7){$X^{(2)}_{1}$}

\put(0,34){\line(-1,0){1}}
\put(1,34){\line(-1,0){2}}
\put(-11,33){$X^{(2)}_{2}$}

\put(0,45){\line(-1,0){1}}
\put(1,45){\line(-1,0){2}}
\put(-11,44){$X^{(2)}_{3}$}

\put(0,65){\line(-1,0){1}}
\put(1,65){\line(-1,0){2}}
\put(-11,64){$X^{(2)}_{4}$}

\put(0,0){\thicklines{\line(1,0){70}}}

\put(14,0){\line(0,-1){1}}
\put(14,0){\line(0,0){1}}
\put(11,-8){$X^{(1)}_{1}$}

\put(24,0){\line(0,-1){1}}
\put(24,0){\line(0,0){1}}
\put(21,-8){$X^{(1)}_{2}$}

\put(50,0){\line(0,-1){1}}
\put(50,0){\line(0,0){1}}
\put(47,-8){$X^{(1)}_{3}$}

\put(68,0){\line(0,-1){1}}
\put(68,0){\line(0,0){1}}
\put(65,-8){$X^{(1)}_{4}$}

\end{picture}
\caption{The mobile sensors located in the quadrant $[0,\infty)^2$ according to 2 identical and independent Poisson processes.}
\label{fig:aser}
\end{figure}

In order to fullfil the requirements (a), (b), (c) two algorithms are presented. Namely,
\begin{itemize}
 \item for the case of $s=\frac{1}{\lambda},$ $v=\frac{1}{\lambda}$ we show that the expected $a-$total movement of algorithm
  $MV_d\left(n,\frac{1}{\lambda}\right)$ is in $\frac{\Theta\left(n^{1+\frac{a}{2d}}\right)}{\lambda^a}$ (see Theorem  \ref{thm:mainexacthigh}),
 \item for the case of $s=\frac{1-\epsilon}{\lambda},$ $v=\frac{1+\tau}{\lambda}$ we prove that the expected $a-$total movement of algorithm $I_d\left(n,\frac{1-\epsilon}{\lambda}, \frac{1+\tau}{\lambda} \right)$ is in
$
O\left(n^{1-\frac{a}{d}}\right)
$
(see Theorem \ref{thm:mainexacthigh_eps_minus}).
\end{itemize}
\begin{algorithm}[H]
\caption{$MV_d(n,s)\,\,$ Moving sensors in the $[0,\infty)^d,$ $d\ge 2,$ 
$s>0.$ }
\label{alg_d}
\begin{algorithmic}[1]
 \REQUIRE The initial location $(X^{(1)}_{i_1}, X^{(2)}_{i_2},\dots,X^{(d)}_{i_d})$ of the $n$ sensors in the $[0,\infty)^d,$ $1\le i_1,i_2,\dots , i_d\le n^{1/d}.$
 \ENSURE  The final positions of the sensors such that in each column and in each row the consecutive sensors are separated by the distance equal to $s.$
 \STATE{{\Large $\forall_{1\,\, \le\,\, i_1,i_2,\dots , i_d\,\, \le\,\,  n^{1/d}}$} $\,\,\,$ move the sensor at the location $(X^{(1)}_{i_1},X^{(2)}_{i_2},\dots, X^{(d)}_{i_d})$ 
 to the position $(X^{(1)}_{1+(i_1-1)s},X^{(2)}_{1+(i_2-1)s},\dots, X^{(d)}_{1+(i_d-1)s});$}
\end{algorithmic}
\end{algorithm}

\begin{algorithm}[H]
\caption{$I_d(n,s,v)\,\,$ Moving sensors in the $[0,\infty)^d,$ $d\ge 2,$ $0<s<v.$}
\label{alg_din}
\begin{algorithmic}[1]
 \REQUIRE The initial location $(X^{(1)}_{i_1}, X^{(2)}_{i_2},\dots,X^{(d)}_{i_d})$ of the $n$ sensors in the $[0,\infty)^d,$ $1\le i_1,i_2,\dots , i_d\le n^{1/d}.$
 \ENSURE  The final positions of the sensors such that in each column and in each row the sensors satisfy
$(s,v)−-IP$ interference-connectivity requirement.
 \STATE{For each column and row in the $[0,\infty)^d$ apply algorithm $I_1(n^{1/d},s,v);$}
\end{algorithmic}
\end{algorithm}
We call a move of a sensor a \textit{sliding move} if the final position of the sensor is either in the same row or column as its initial position.

In this section, we restrict the movement of sensors to a \textit{sliding movement}. The claim is justified in Lemma \ref{lem:crucial}  whose simple proof is omitted.
Such a reduction of the movement is indeed \textit{crucial} and \textit{reduces the displacement of sensors in the higher dimension to the displacement of sensors in the half-infinite interval
$[0,\infty).$}
\begin{Lemma}
\label{lem:crucial}
 The optimal reallocation of sensors which ensures the requirements (a), (b), (c) is a \textit{sliding movement}.
\end{Lemma}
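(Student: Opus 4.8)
The plan is to take an arbitrary reallocation satisfying (a)--(c) with finite $d$-dimensional $a$-total movement and show that it can be modified, without increasing the cost, into one in which the $n^{(d-1)/d}$ ``slabs'' of sensors sharing a common coordinate move rigidly along the grid lines through their starting cells. That is exactly a sliding movement, and it simultaneously exhibits the $d$-dimensional instance as $d$ independent copies of the line problem, which is the reduction advertised after the statement.

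First I would record the structural consequence of requirement (b): after the reallocation there are $n^{(d-1)/d}$ rows and columns in each axial direction, each carrying $n^{1/d}$ sensors, so the final configuration is a full grid determined by $d$ increasing sequences $Y^{(j)}_1<\cdots<Y^{(j)}_{m}$ with $m=n^{1/d}$, and by (c) these satisfy $s\le Y^{(j)}_{k+1}-Y^{(j)}_k\le v$. Since $M_{d,a}=\sum_{p}\sum_{j=1}^d |m_j(p)|^a=\sum_{j=1}^d\big(\sum_p|m_j(p)|^a\big)$, the cost splits into a sum of $d$ one-dimensional costs, the $j$-th of which depends only on how the multiset of initial $j$-th coordinates (the value $X^{(j)}_k$ taken with multiplicity $m^{d-1}$, $k=1,\dots,m$) is transported onto the multiset of final $j$-th coordinates; each axis, together with its target sequence $Y^{(j)}_\bullet$, may therefore be optimized independently.

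Next I would fix an axis $j$ and argue by exchange. Group the sensors into slabs $L_1,\dots,L_m$, where $L_k$ is the set of $m^{d-1}$ sensors with initial $j$-th coordinate $X^{(j)}_k$. A first swap argument shows that in an optimal reallocation all sensors of a fixed slab receive the same final $j$-th coordinate, because the cost of splitting a slab between two target values is linear in the size of the split and hence minimized at an endpoint (valid for every $a>0$). A second, rearrangement-type argument (the higher-dimensional analogue of the monotonicity lemma used in Section~\ref{chap:line}) shows the induced slab-to-target map $L_k\mapsto Y^{(j)}_{\pi(k)}$ may be taken order-preserving, i.e. $\pi=\mathrm{id}$: uncrossing two crossed slabs does not increase $\sum_p|m_j(p)|^a$ because $t\mapsto|t|^a$ is convex. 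Hence every sensor of $L_k$ is displaced along axis $j$ by the common amount $Y^{(j)}_k-X^{(j)}_k$; performing this for all $d$ axes sends the sensor at grid-cell $(i_1,\dots,i_d)$ to cell $(i_1,\dots,i_d)$ of the new grid, travelling only along the grid lines through its starting cell, which is a sliding movement, and the resulting one-dimensional subproblems are each an instance of the line problem with $m=n^{1/d}$ sensors and interference distances $s,v$, scaled by the multiplicity $m^{d-1}$.

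The step I expect to be the real obstacle is the rearrangement (no-crossing) claim for the slab-to-target assignment: for $0<a<1$ the map $t\mapsto|t|^a$ is not convex and sorted-to-sorted need not be optimal, so the uncrossing must either be restricted to $a\ge1$ — which is all the matching lower bounds of Section~\ref{chap:higher} require, the $a<1$ entries of Table~\ref{tab:higher} being upper bounds realized directly by algorithms $MV_d$ and $I_d$ — or else be replaced by an argument exploiting the forced gaps $s\le Y^{(j)}_{k+1}-Y^{(j)}_k\le v$ to preclude crossings. Everything else is the elementary bookkeeping of decomposing a product configuration and is what the text calls the ``simple proof''.
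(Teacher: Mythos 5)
The paper does not actually give a proof of Lemma \ref{lem:crucial}: it is stated with the remark that its ``simple proof is omitted,'' so there is nothing to compare your argument against and your reconstruction must stand on its own. Broadly it does, and it supplies exactly the reduction the paper needs: requirement (b) together with (c) forces the final configuration to be a full Cartesian grid $\{Y^{(1)}_{k_1}\}\times\dots\times\{Y^{(d)}_{k_d}\}$ with $s\le Y^{(j)}_{k+1}-Y^{(j)}_k\le v$, and $M_{d,a}$ splits by definition into $d$ axis costs. Two steps need tightening. First, the axes cannot literally be ``optimized independently'': choosing for each $j$ a slab-to-target map with the right marginals does not guarantee that the combined map is a bijection onto the grid cells (take $d=2$, $m=2$, both maps equal to $(1,1,2,2)$: two cells receive two sensors and two receive none). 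The argument still closes because the axiswise relaxation only needs to give a \emph{lower} bound on the true optimum, and the identity (order-preserving) assignment it produces is jointly feasible --- but you should say this explicitly. Likewise your ``linear in the size of the split'' justification of slab rigidity ignores the demand constraint that each $Y^{(j)}_u$ must receive exactly $m^{d-1}$ sensors; the clean version is that the axis-$j$ cost is linear on the transportation polytope with all supplies and demands equal to $m^{d-1}$, which is $m^{d-1}$ times the Birkhoff polytope, so the minimum is attained at $m^{d-1}$ times a permutation matrix.

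Second, the obstacle you flag is real, not hypothetical: the uncrossing inequality $|x_1-y_1|^a+|x_2-y_2|^a\le|x_1-y_2|^a+|x_2-y_1|^a$ for $x_1\le x_2$, $y_1\le y_2$ requires convexity of $t\mapsto|t|^a$ and fails for $0<a<1$ (with $a=\tfrac12$, $x_1=0$, $x_2=1$, $y_1=1$, $y_2=2$ the sorted cost is $2$ while the crossed cost is $\sqrt{2}$), and the constraint $s\le y_2-y_1\le v$ does not exclude such configurations. So your route proves the lemma only for $a\ge 1$, whereas the lemma is stated unconditionally in $a$. Your defence of this restriction is accurate as far as the paper's use of the lemma goes: it is invoked only in the lower-bound halves of Theorem \ref{thm:mainexacthigh} and Theorem \ref{thm:mainexacthigh_eps}, which are claimed only for $a\ge 1$, while the $a\in(0,1)$ entries of Table \ref{tab:higher} are upper bounds realized directly by the sliding algorithms $MV_d$ and $I_d$. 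But as a proof of the lemma as literally stated, the case $a<1$ remains open in your write-up, and nothing in the paper fills it either.
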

The next simple lemma will be helpfull in estimating the upper bound in Theorems \ref{thm:mainexacthigh}-\ref{thm:mainexacthigh_eps_minus}. 
\begin{Lemma}
\label{lem:cruciana}
Fix $a\ge 1.$  Let $M$ be the sensor movement in $[0,\infty)^d.$ Assume that $M=\sum_{i=1}^{d} M_i,$
where $M_i$ is the sensor movement along the $i-$th fixed axis $[0,\infty).$ 
Then 
$$\mathbf{E}[M^a]\le C_{a,d}\sum_{i=1}^{d}\mathbf{E}[M_i^a],$$
where $C_{a,d}$ is some constant which depend only on fixed $a$ and $d.$
\end{Lemma}
We now embark to extend the results from Section \ref{chap:line}  to the high dimensions. We can prove the following sequences of Theorem.

The next theorem  clarifies how the interference-connectivity distances $s=\frac{1}{\lambda},$ $v=\frac{1}{\lambda}$ affect the expected minimal 
$a-$total movement.
\begin{theorem} 
\label{thm:mainexacthigh} 
Let $a\ge 1$ be a constant. 
Assume that $n$ sensors are placed in the $[0,\infty)^d$ according to $d$ independent identical Poisson processes,
each with arrival rate $\lambda$ and the reallocation of sensors ensures the requirements (a), (b), (c).
If the interference-connectivity distances $s=\frac{1}{\lambda},$ $v=\frac{1}{\lambda}$
then the expected minimal $a-$total movement is in
$
\frac{\Theta\left(n^{1+\frac{a}{2d}}\right)}{\lambda^a}.
$
\end{theorem}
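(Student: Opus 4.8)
The plan is to reduce the $d$-dimensional problem to the one-dimensional results already established in Section \ref{chap:line}, exploiting the sliding-movement reduction of Lemma \ref{lem:crucial}. First I would observe that under a sliding movement the sensors decompose into $2n^{(d-1)/d}$ independent linear instances: for each of the $n^{(d-1)/d}$ rows and each of the $n^{(d-1)/d}$ columns we have $n^{1/d}$ sensors whose coordinates along that axis are the arrival times of a Poisson process with rate $n^{1/d}$. Moreover the total $d$-dimensional $a$-total movement is exactly the sum, over all rows and all columns, of the one-dimensional $a$-total movements along the respective axes, because requirement (a) forces each displacement to be a pure axis translation and $M_{d,a}$ adds the $|m_j(i)|^a$ contributions coordinate-by-coordinate.

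For the \emph{upper bound}, I would run algorithm $MV_d(n,n^{-1/d})$, i.e. apply $MV_1(n^{1/d}, n^{-1/d})$ (Algorithm \ref{alg_one} with $N:=n^{1/d}$ sensors and spacing $1/N$) independently along each row and each column. By Theorem \ref{thm:mainexactfract} with parameter $N=n^{1/d}$ in place of $n$, each such linear instance has expected $a$-total movement $O\!\left(N^{1-a/2}\right)=O\!\left(n^{(1/d)(1-a/2)}\right)$. Summing over the $2n^{(d-1)/d}$ rows and columns and using linearity of expectation gives
$$
\mathbf{E}[M_{d,a}] = O\!\left(n^{(d-1)/d}\cdot n^{(1/d)(1-a/2)}\right)=O\!\left(n^{1-\frac{a}{2d}}\right),
$$
valid for all $a>0$. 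This already yields the second case and the upper half of the first.

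For the matching \emph{lower bound} when $a\ge 1$, I would argue that \emph{any} algorithm satisfying (a), (b), (c) is, by Lemma \ref{lem:crucial}, a sliding movement, hence restricts to a valid $\left(n^{-1/d},n^{-1/d}\right)$-$IP$ solver on each row and each column of $n^{1/d}$ sensors (though possibly with a nontrivial, random common offset for the first sensor in each line — exactly the situation handled in Theorem \ref{thm:mainexactfractlower}, where the first sensor's final position is an arbitrary nonnegative random variable $Z$). Applying Theorem \ref{thm:mainexactfractlower1a} (for $a>1$) or Theorem \ref{thm:mainexactfractlower} (for $a=1$) with $N=n^{1/d}$, each line contributes expected $a$-total movement $\Omega\!\left(N^{1-a/2}\right)=\Omega\!\left(n^{(1/d)(1-a/2)}\right)$, and summing the independent contributions over $n^{(d-1)/d}$ rows (or columns) gives $\mathbf{E}[M_{d,a}]=\Omega\!\left(n^{1-\frac{a}{2d}}\right)$. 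Combined with the upper bound this proves $\Theta\!\left(n^{1-\frac{a}{2d}}\right)$ for $a\ge1$.

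The main obstacle I anticipate is making the ``restricts to a valid linear solver on each line'' step fully rigorous: one must check that the per-line first-sensor offsets and the interleaving of row- and column-constraints do not create dependencies that break the per-line lower bounds, and that the one-dimensional lower-bound theorems genuinely apply verbatim with the random offset $Z$ (they are stated in that generality, so this should go through). A minor point is that $n^{1/d}$ need not be an integer; I would either assume $n$ is a perfect $d$-th power throughout, as the model setup tacitly does, or absorb the rounding into the $\Theta$-constants.
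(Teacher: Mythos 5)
Your proposal matches the paper's own argument essentially verbatim: the upper bound is obtained by applying Theorem \ref{thm:mainexactfract} with $n:=n^{1/d}$ to each of the $2n^{(d-1)/d}$ rows and columns, and the lower bound follows from Lemma \ref{lem:crucial} together with Theorems \ref{thm:mainexactfractlower} and \ref{thm:mainexactfractlower1a} applied per line in the same way. The reduction to sliding movements and the scaling $2n^{(d-1)/d}\cdot\Theta\bigl((n^{1/d})^{1-a/2}\bigr)=\Theta\bigl(n^{1-\frac{a}{2d}}\bigr)$ are exactly what the paper does, so your proof is correct and takes the same route.
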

\begin{proof}
First of all, we discuss the proof of the upper bound. By Theorem \ref{thm:mainexactfract} applied to
$n:=n^{1/d}$ and 
for $n^{(d-1)/d}$ columns and $n^{(d-1)/d}$ rows, as well as Lemma \ref{lem:cruciana}
we derive that the expected $a-$total movement of algorithm $MV_d\left(n,\frac{1}{\lambda}\right)$ is 
$$2 C_{a,d} n^{{(d-1)}/d}\frac{O\left(\left(n^{1/d}\right)^{1+\frac{a}{2}}\right)}{\lambda^a}=\frac{O\left(n^{1+\frac{a}{2d}}\right)}{\lambda^a}.$$
Next we prove the lower bound. Since the movement of sensors along the axes is a \textit{sliding move} to attain the interference-connectivity distances $s=\frac{1}{\lambda},$ $v=\frac{1}{\lambda}$
in the $[0,\infty)^d$ the sensors have to attain the interference-connectivity distances $s=\frac{1}{\lambda},$ $v=\frac{1}{\lambda}$ in each column and each row.
By Theorem \ref{thm:mainexactfractlower} and Theorem \ref{thm:mainexactfractlower1a} applied to
$n:=n^{1/d}$ and 
for $n^{(d-1)/d}$ columns 
we have the following lower bound
$$n^{{(d-1)}/d}\frac{\Omega\left(\left(n^{1/d}\right)^{1+\frac{a}{2}}\right)}{\lambda^a}=\frac{\Omega\left(n^{1+\frac{a}{2d}}\right)}{\lambda^a}.$$
This is sufficient to complete the proof of Theorem \ref{thm:mainexacthigh}.
\end{proof}
We now analyze the expected minimal 
$a-$total movement when the  interference-connectivity distances $s$ and $v$ are greater than $\frac{1}{\lambda}.$
The proof of the next theorem is analogous to the proof of Theorem \ref{thm:mainexacthigh}.
\begin{theorem} 
\label{thm:mainexacthigh_eps} Fix $\tau\ge \epsilon>0$ independent on $n.$
Let $a\ge 1$ be a constant. 
Assume that $n$ sensors are placed in the $[0,\infty)^d$ according to $d$ independent identical Poisson processes,
each with arrival rate $\lambda$ and the reallocation of sensors ensures the requirements (a), (b), (c).
If the interference-connectivity distances are equal to $s=\frac{1+\epsilon}{\lambda},$ $v=\frac{1+\tau}{\lambda},$
then the expected  minimal $a-$total movement is in 
$
\frac{\Theta\left(n^{1+\frac{a}{d}}\right)}{\lambda^a}.
$
\end{theorem}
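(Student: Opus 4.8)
The plan is to reduce the $d$-dimensional problem to the one-dimensional case exactly as in the proof of Theorem~\ref{thm:mainexacthigh}, but now invoking the line results for interference distances strictly above $\frac{1}{n}$, namely Theorem~\ref{thm:mainexactfract_epsilon} for the upper bound and Theorems~\ref{thm:mainexactfractlower_eps} and~\ref{thm:mainexactfractlower_eps1} for the lower bound. First I would recall the structural fact (Lemma~\ref{lem:crucial}) that the optimal reallocation satisfying requirements (a), (b), (c) is a sliding movement, so that the total displacement cost splits additively over the $n^{(d-1)/d}$ rows and $n^{(d-1)/d}$ columns, each of which contains $n^{1/d}$ sensors placed according to a one-dimensional Poisson process with arrival rate $n^{1/d}$.

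For the upper bound, I would run algorithm $MV_d\!\left(n,\frac{1+\epsilon}{n^{1/d}}\right)$, which by construction applies the line algorithm $MV_1\!\left(n^{1/d},\frac{1+\epsilon}{n^{1/d}}\right)$ independently to each row and each column. Applying Theorem~\ref{thm:mainexactfract_epsilon} with $n$ replaced by $n^{1/d}$ gives expected $a$-total movement $O\!\left(n^{1/d}\right)$ per row or column, valid for all $a>0$. Summing over the $2\,n^{(d-1)/d}$ rows and columns yields
$$
2\,n^{(d-1)/d}\cdot O\!\left(n^{1/d}\right)=O(n),\qquad a>0,
$$
which in particular covers the case $a\in(0,1)$.

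For the lower bound when $a\ge 1$, I would argue that since the movement is a sliding move, attaining the interference distances $s=\frac{1+\epsilon}{n^{1/d}}$, $v=\frac{1+\tau}{n^{1/d}}$ in $[0,\infty)^d$ forces each row and each column to satisfy the $\left(\frac{1+\epsilon}{n^{1/d}},\frac{1+\tau}{n^{1/d}}\right)$-$IP$ requirement as a one-dimensional problem with $n^{1/d}$ sensors. By Theorem~\ref{thm:mainexactfractlower_eps} (case $a=1$) and Theorem~\ref{thm:mainexactfractlower_eps1} (case $a>1$), each such row or column requires expected $a$-total movement $\Omega\!\left(n^{1/d}\right)$; multiplying by the number $2\,n^{(d-1)/d}$ of rows and columns gives the lower bound $\Omega(n)$. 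Combining with the upper bound $O(n)$ yields $\Theta(n)$ for $a\ge1$, completing the proof.

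The only subtle point — and the step I would spell out most carefully — is the independence/decoupling used in summing the per-row and per-column costs: one must check that the cost metric $M_{d,a}$ genuinely decomposes as a sum over rows and columns under a sliding movement (each coordinate-$j$ displacement $|m_j(i)|^a$ is charged to exactly one row or column), and that the lower-bound argument legitimately restricts attention to sliding movements via Lemma~\ref{lem:crucial}. The probabilistic heavy lifting is entirely inherited from Section~\ref{chap:line}; the higher-dimensional step is purely a counting/bookkeeping argument, just as in Theorem~\ref{thm:mainexacthigh}.
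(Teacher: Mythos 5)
Your proposal is correct and follows essentially the same route as the paper: the upper bound comes from applying Theorem~\ref{thm:mainexactfract_epsilon} with $n:=n^{1/d}$ to each of the $2n^{(d-1)/d}$ rows and columns, and the lower bound from applying Theorems~\ref{thm:mainexactfractlower_eps} and~\ref{thm:mainexactfractlower_eps1} in the same way after restricting to sliding movements via Lemma~\ref{lem:crucial}. Your extra remark about checking that $M_{d,a}$ decomposes additively over rows and columns is a useful explicit note on a point the paper leaves implicit, but it does not change the argument.
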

The next theorem provides the expected minimal 
$a-$total movement for $s<\frac{1}{\lambda}$ and $v>\frac{1}{\lambda}.$
\begin{theorem} 
\label{thm:mainexacthigh_eps_minus} 
Fix $\epsilon, \tau>0$ independent on $n.$
Let $a\ge 1$ be a constant. 
Assume that $n$ sensors are placed in the $[0,\infty)^d$ according to $d$ independent identical Poisson processes,
each with arrival rate $\lambda$  and the reallocation of sensors ensures the requirements (a), (b), (c). 
The expected $a-$total movement of algorithm $I_d\left(n,\frac{1-\epsilon}{\lambda}, \frac{1+\tau}{\lambda} \right)$ is in
$
\frac{O\left(n\right)}{\lambda^a}. 
$
\end{theorem}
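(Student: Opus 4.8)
The plan is to reduce the $d$-dimensional problem to $d$ independent batches of one-dimensional problems and then invoke Theorem~\ref{thm:interfere} line by line. First I would note that, by the very definition of Algorithm~\ref{alg_din} (and consistently with Lemma~\ref{lem:crucial}), the reallocation consists only of sliding moves: running $I_1\!\left(n^{1/d},s,v\right)$ on an axis-$k$ line changes only the $k$-th coordinates of the sensors on it. Since moves performed along different axes act on disjoint coordinates, they commute, and after processing all axes every axis-parallel line (every ``row'' and ``column'' in the higher-dimensional sense) satisfies the $(s,v)-IP$ requirement while the grid structure is preserved, so requirements (a), (b), (c) hold. This also shows that the $d$-dimensional $a$-total movement of Algorithm~\ref{alg_din} decomposes as $M_{d,a}=\sum_{k=1}^{d}M^{(k)}$, where $M^{(k)}$ gathers the $a$-th powers of all displacements performed along the $k$-th axis.

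Next I would exploit the product structure of the initial placement. A line parallel to axis $k$ is obtained by fixing the other $d-1$ indices and letting the $k$-th index range over $1,\dots,n^{1/d}$; along any such line the $k$-th coordinates are exactly $X^{(k)}_1\le\cdots\le X^{(k)}_{n^{1/d}}$, i.e. the arrival times of a single Poisson process with rate $n^{1/d}$, regardless of the fixed indices. Hence Algorithm~\ref{alg_din} assigns to every sensor with $k$-th index $i_k$ the common displacement $m_k(i_k)$ equal to the displacement of $X^{(k)}_{i_k}$ under $I_1\!\left(n^{1/d},\frac{1-\epsilon}{n^{1/d}},\frac{1+\tau}{n^{1/d}}\right)$, and there are $n^{(d-1)/d}$ such sensors. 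Therefore $M^{(k)}=n^{(d-1)/d}\,T^{(k)}_a$, where $T^{(k)}_a=\sum_{i_k=1}^{n^{1/d}}|m_k(i_k)|^a$ is precisely the $a$-total movement produced by that one-dimensional run.

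Finally I would pass to expectations. Applying Theorem~\ref{thm:interfere} with $n$ replaced by $n^{1/d}$ (legitimate since $\epsilon,\tau$ are independent of $n$, hence of $n^{1/d}$) gives $\mathbf{E}\!\left[T^{(k)}_a\right]=O\!\left(\left(n^{1/d}\right)^{1-a}\right)=O\!\left(n^{(1-a)/d}\right)$ for each $k$. Summing over the $d$ axes and using that $d$ is a fixed constant,
\[
\mathbf{E}[M_{d,a}]=\sum_{k=1}^{d}n^{(d-1)/d}\,\mathbf{E}\!\left[T^{(k)}_a\right]=d\,n^{(d-1)/d}\,O\!\left(n^{(1-a)/d}\right)=O\!\left(n^{1-\frac{a}{d}}\right),
\]
which is the claimed bound. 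I do not foresee a genuine difficulty here: the only step requiring care is the bookkeeping in the second paragraph — that the displacement along axis $k$ is literally the same for all $n^{(d-1)/d}$ parallel lines and is decoupled from the displacements along the other axes — so that the one-dimensional estimate of Theorem~\ref{thm:interfere} can be applied verbatim.
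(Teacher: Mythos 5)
Your proposal is correct and follows essentially the same route as the paper: reduce to the $d\cdot n^{(d-1)/d}$ axis-parallel lines, apply Theorem~\ref{thm:interfere} with $n:=n^{1/d}$ to each, and sum to get $O\left(n^{1-\frac{a}{d}}\right)$. Your second paragraph merely makes explicit the bookkeeping (identical displacements across parallel lines, decoupling of axes) that the paper leaves implicit via Lemma~\ref{lem:crucial}.
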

\section{Experimental Results}
\label{sec:sim}
In this section, we provide a set of experiments to illustrate how interference distance $s$ and connectivity distance $v$ impact the minimal expected $a-$total movement.

Namely, we implemented Algorithms (\ref{alg_one}-\ref{alg_interference}) in Wolfram Mathematica $10.0$ for $a=1$ and $a=2$ and $\lambda=n.$
\footnote{\textbf{It is not difficult to repeat the simulation from this section for all exponents $a\ge 1$ and to any parameter $\lambda>0,$ as well as for Algorithms (\ref{alg_d}-\ref{alg_din}) to visualize and confirm the threshold phenomena.}}

Figure 3 and 6 illustrate the expected $a-$total movement of Algorithm \ref{alg_one} when $s=v=\frac{1.1}{n}$ 
for the number of sensors $n\in\{1,2,\dots, 3000\}.$ Observe that $1-$total movement ${E}_{1}(n)$ and 
$2-$total movement ${E}_{2}(n)$ are in $\Theta(n).$

In Figure 4 and 7 the black dots are the numerical results of the expected $a-$total movement of Algorithm \ref{alg_one} for $s=\frac{1}{n}.$
The additional curves $\left\{\left(n,\frac{1}{\sqrt{2}\Gamma\left(\frac{5}{2}\right)}\sqrt{n}\right), 1\le n\le 3000 \right\},$ 
$\left\{\left(n,\frac{1}{2}\right), 1\le n\le 3000 \right\}$ represent the exact theoretical estimations (see Theorem
\ref{thm:mainexactodd2} for $a=1$ and Theorem \ref{thm:mainexactodd} for $a=2,\,\,$ $\,\,\Delta=0$).

The expected $a-$total movements of Algorithm \ref{alg_interference} for the parameters $s=\frac{0.4}{n}$ and $v=\frac{1.6}{n}$ and the number of sensors
$n\in\{1,2,\dots, 3000\}$ are depicted in Figure 2 and 5. It can be seen that  $1-$total movement  ${E}_{1}(n)$ is in $\Theta(1)$ and $2-$total movement ${E}_{2}(n)$ is in $\Theta\left(\frac{1}{n}\right).$ 

\begin{figure}[H]\label{fig.1a}
  \begin{center}
 \begin{minipage}[b]{0.43\textwidth} \centering \includegraphics[width=1.00\textwidth]{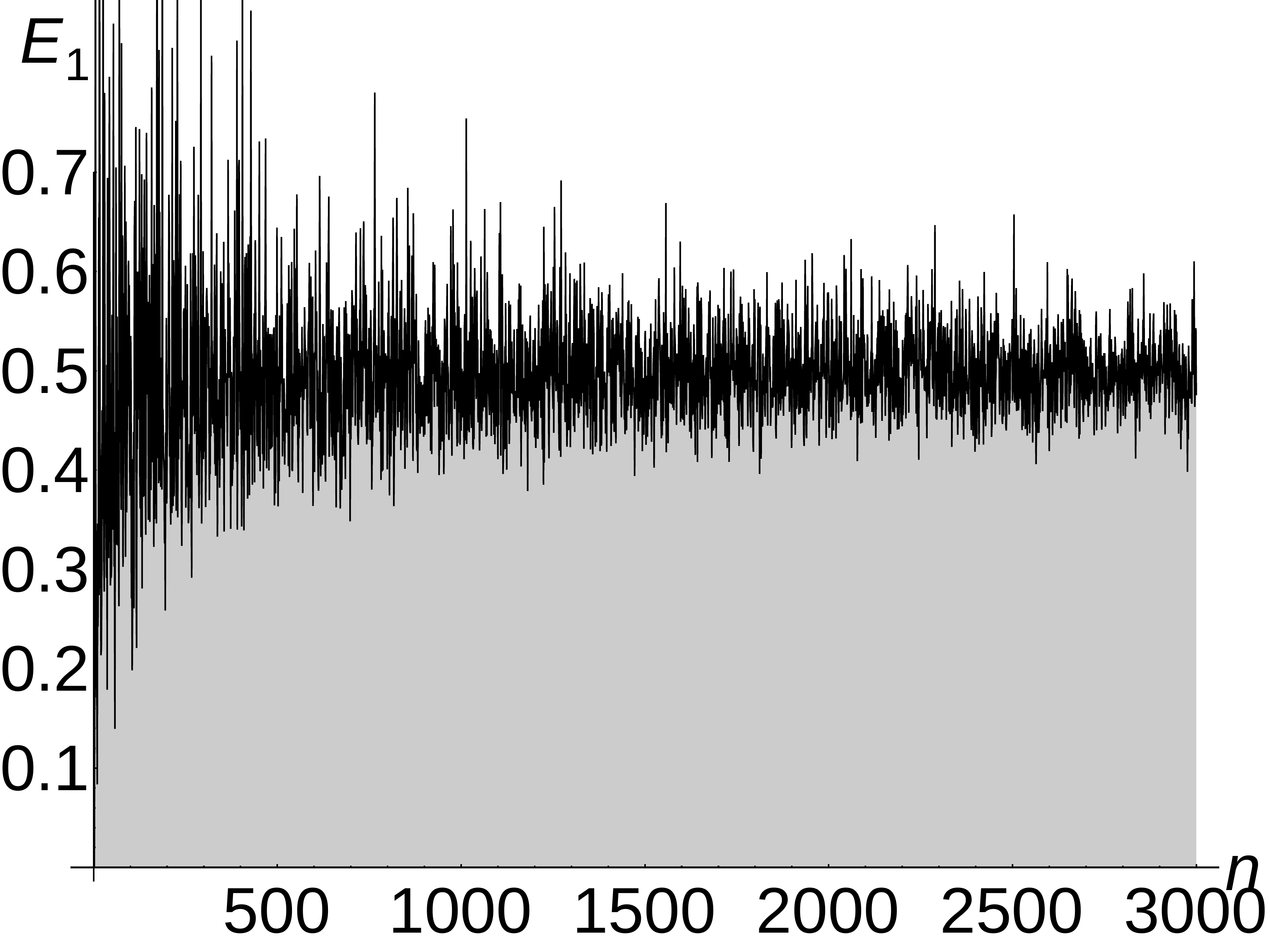}\\ ${E}_{1}(n)=\Theta(1)$\\ 
 \textbf{Case} $s<\frac{1}{n}$ \& $v>\frac{1}{n}.$
 \caption{The expected $1-$total movement ${E}_{1}(n)$ of \textbf{Algorithm \ref{alg_interference}} for $s=\frac{0.4}{n}$ and $v=\frac{1.6}{n}.$}\label{a:1}
 \end{minipage}
\end{center}
\end{figure}
\begin{figure}[H]\label{fig.1b}
  \begin{center}
 \begin{minipage}[b]{0.45\textwidth} \centering \includegraphics[width=1.00\textwidth]{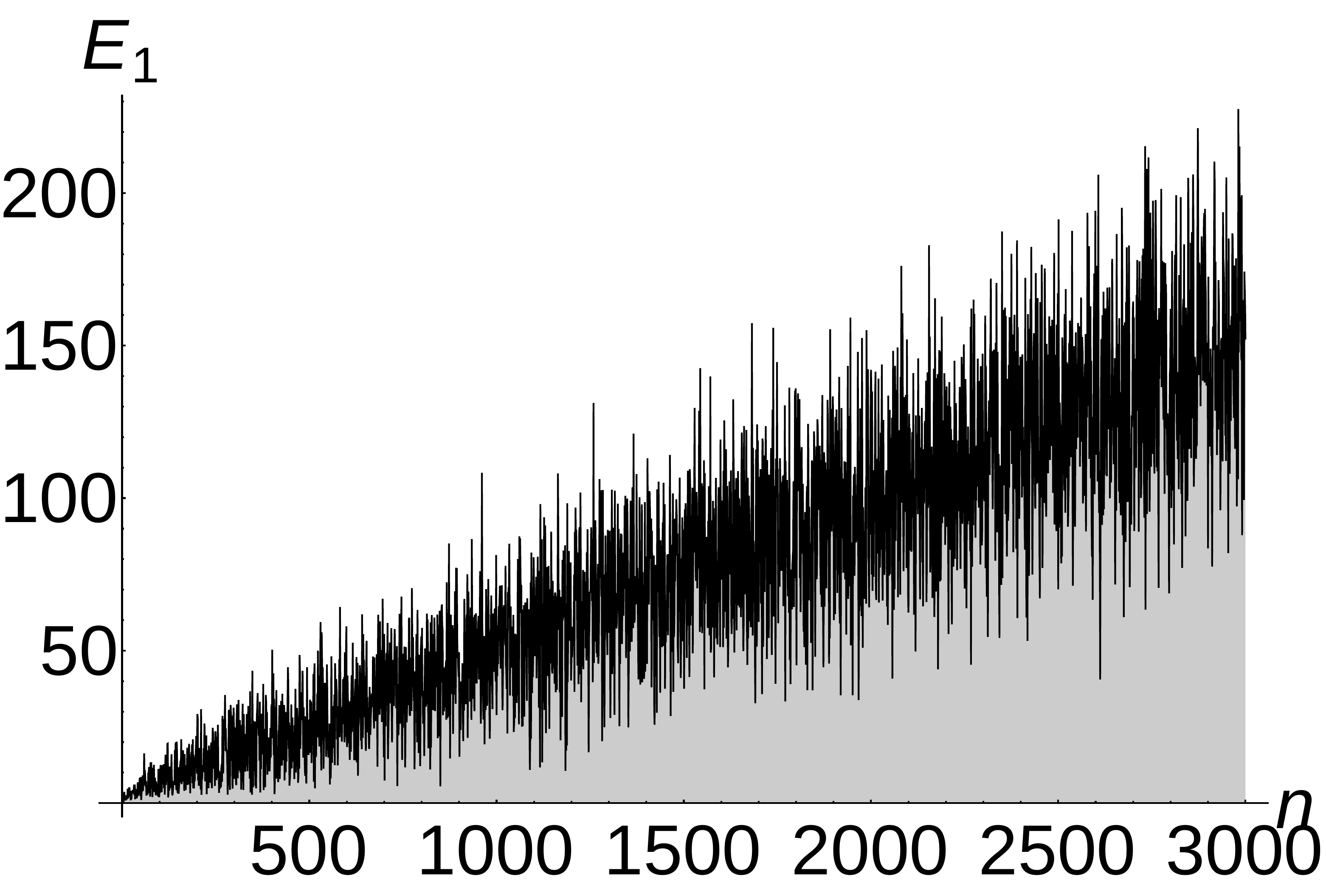}\\ ${E}_{1}(n)=\Theta(n)$ \\
 \textbf{Case} $s>\frac{1}{n}$ \& $v>\frac{1}{n}.$
  \caption{The expected $1-$total movement ${E}_{1}(n)$ of \textbf{Algorithm \ref{alg_one}} for $s=\frac{1.1}{n}$}\label{a:2}
 \end{minipage}
  \end{center}
 \end{figure}
 \begin{figure}[H]
\label{fig.1c}
  \begin{center}
    \includegraphics[width=0.43\textwidth]{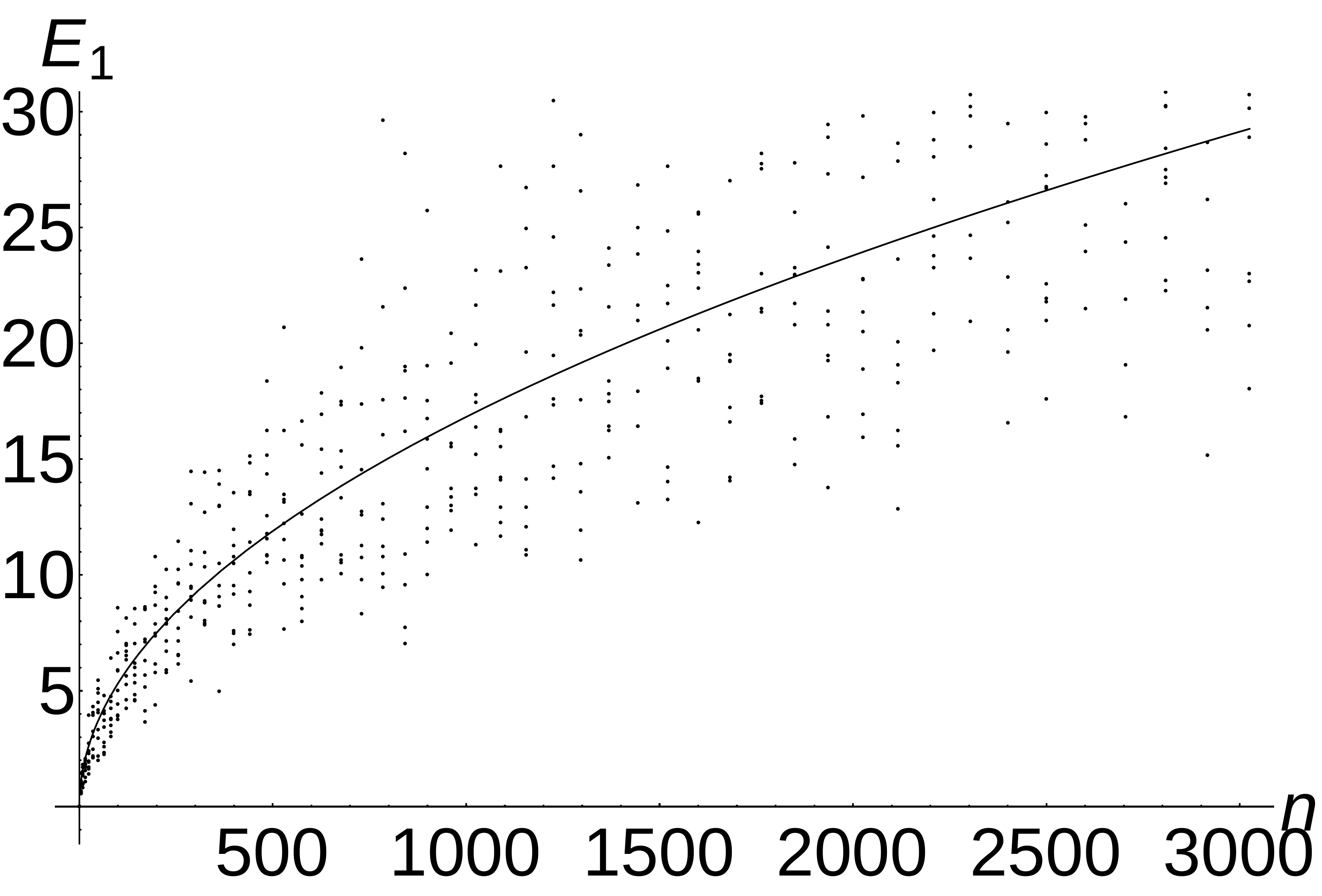}\\ ${E}_{1}(n)\sim \frac{1}{\sqrt{2}\Gamma\left(\frac{5}{2}\right)} \sqrt{n}$\\
     \textbf{Case} $s=\frac{1}{n}$ \& $v=\frac{1}{n}.$
  \end{center}
  \caption{The expected $1-$total movement ${E}_{1}(n)$ of \textbf{Algorithm \ref{alg_one}} for $s=\frac{1}{n}$}\label{a:3}
\end{figure}
\begin{figure}[H]
 \label{fig.2a}
  \begin{center}
 \begin{minipage}[b]{0.43\textwidth} \centering \includegraphics[width=1.00\textwidth]{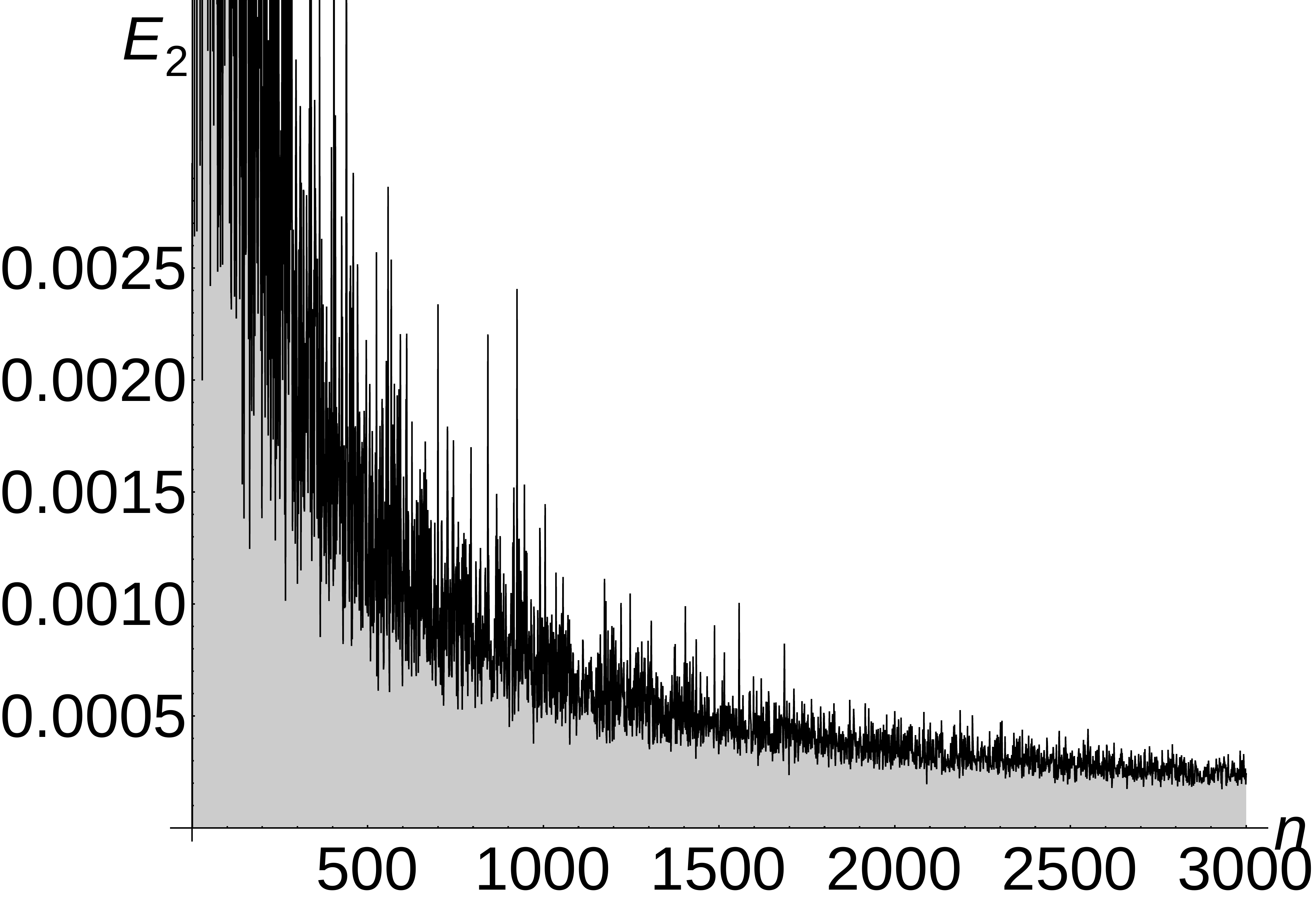}\\ ${E}_{2}(n)=\Theta\left(\frac{1}{n}\right)$\\
 \textbf{Case} $s<\frac{1}{n}$ \& $v>\frac{1}{n}.$
 \caption{The expected $2-$total movement ${E}_{2}(n)$ of \textbf{Algorithm \ref{alg_interference}} for $s=\frac{0.4}{n}$ and $v=\frac{1.6}{n}.$}\label{a:4}
 \end{minipage}
   \end{center}
 \end{figure}
 \begin{figure}[H]
 \label{fig.2b}
  \begin{center}
 \begin{minipage}[b]{0.43\textwidth} \centering \includegraphics[width=1.00\textwidth]{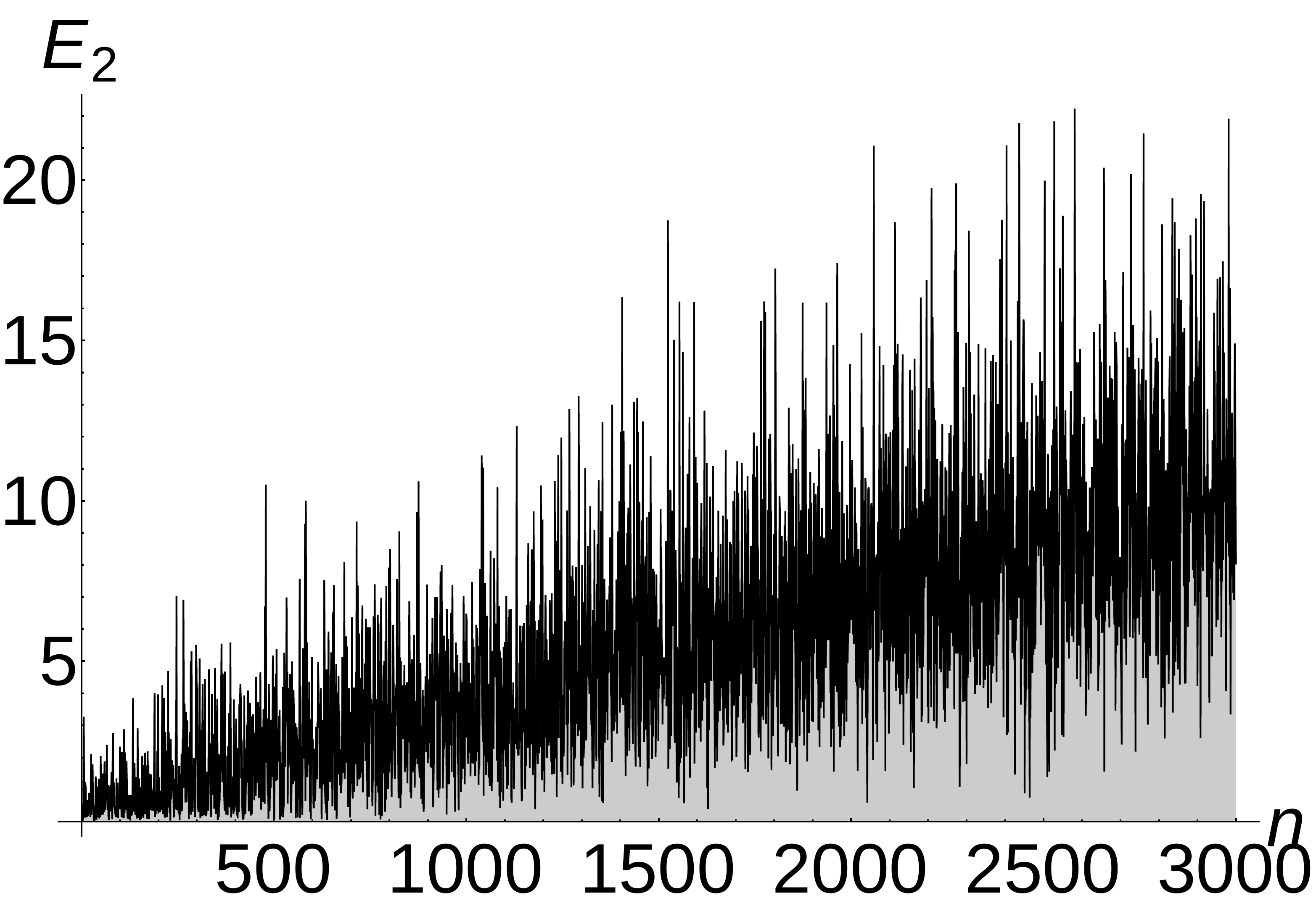}\\  ${E}_{2}(n)=\Theta(n)$\\
   \textbf{Case} $s>\frac{1}{n}$ \& $v>\frac{1}{n}.$
  \caption{The expected $2-$total movement ${E}_{2}(n)$ of \textbf{Algorithm \ref{alg_one}} for $s=\frac{1.1}{n}$}
 \end{minipage}
\end{center}
\end{figure}
\begin{figure}[H]
\label{fig.2c}
  \begin{center}
    \includegraphics[width=0.43\textwidth]{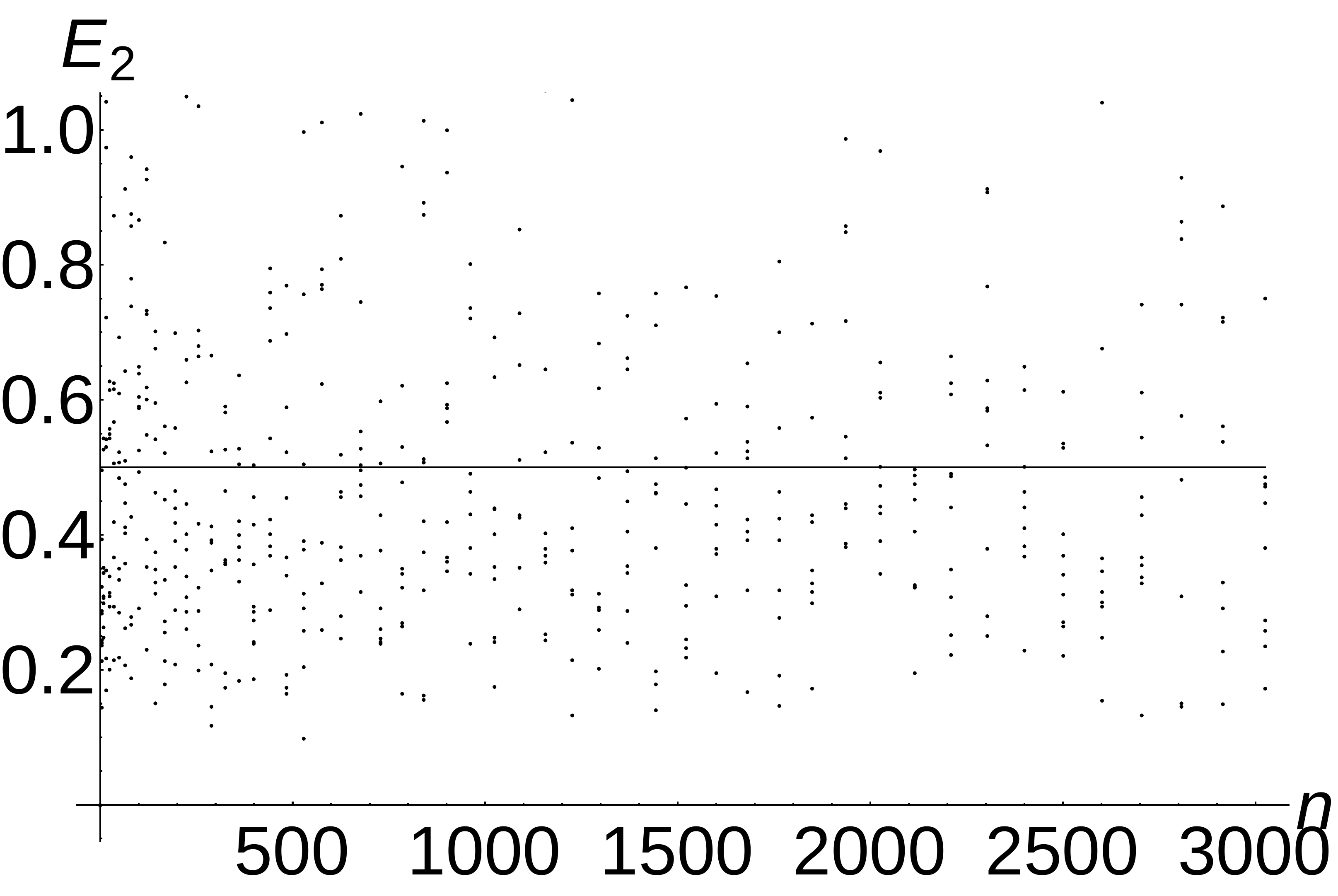}\\ ${E}_{2}(n)\sim \frac{1}{2}$\\
    \textbf{Case} $s=\frac{1}{n}$ \& $v=\frac{1}{n}.$
  \end{center}
  \caption{The expected $2-$total movement ${E}_{2}(n)$ of \textbf{Algorithm \ref{alg_one}} for $s=\frac{1}{n}$}
\end{figure}
Finally, it is worth to pointing out that the simulations confirm very well our theoretical tight, as well as upper bounds.
\section{Conclusion} In this paper we investigated the energy efficient displacement of random sensors
to avoid interference and to ensure good communication when $n$ sensors are initially placed in the hyperoctant $[0,\infty)^d$ according to $d$ identical and independent Poisson processes 
each with arrival rate $\lambda.$ 
We obtained tradeoffs between interference distances $s,$ $v$ and
the expected minimal $a-$total movement of $n$ random sensors. 
 It is discovered and explained {\textit{the threshold phenomena}} 
 around the interference-connectivity distance $s=v=\frac{1}{\lambda}$ 
 for the expected minimal $a-$total movement.
 
It would be interesting to study the energy efficient displacement of sensors
to ensure good communication and to avoid interference
for other more general random processes in one dimension, as well as in the higher dimensions.
 
There are also several interesting problems concerned with the ideal network when the consecutive sensors are not to close
while at the same time are not to far. These include performing efficient monitoring against illegal intruders
in one and in the higher dimensions. Another interesting problem is concerned with good communication when some sensors are unrelaible.
\section{Acknowledgment}
Research supported by grant nr 0401/0052/18 of FFPT, Wroc{\l}aw University of Science and Technology.

This work was also partially done during my scientific visit the School of Computer Science and the School of Mathematics and Statistics of Carleton Univeristy
from February 29 to March 11, 2016. I would like to thank Evangelos Kranakis and Gennady Shaikhet for fruitful discussions. 
\bibliographystyle{plain}
\bibliography{refs}
\newpage
\section*{Appendix}
\begin{proof} (Lemma \ref{lem:sum_tec})
Before providing the proof we recall some known facts about Stirling numbers, Eulerian numbers, as well as the finite difference operator that will be used in the proof.

Let $\left\langle\left\langle n\atop k\right\rangle\right\rangle$   be the Eulerian numbers of the second kind, which are
defined for all integer numbers such that $0\le k \le n.$ 
The following two identities for Eulerian numbers of the second kind are known (see Identities $(6.42),$  and $(6.44)$ in \cite{concrete_1994}):
\begin{equation}
\label{eq:euler1}
\sum_{l}\left\langle\left\langle m\atop l\right\rangle\right\rangle=\frac{(2m)!}{(m)!}\frac{1}{2^m},
\end{equation}
\begin{equation}
\label{eq:euler3}
{m\brack m-p}=\sum_{l}\left\langle\left\langle p\atop l\right\rangle\right\rangle\binom{m+l}{2p}.
\end{equation}

Let us recall the definition of the finite difference of a function f
$$\Delta f(x)=f(x+1)-f(x).$$  Then, high-order differences are defined by iteration
$$\Delta^af(x)=\Delta \Delta^{a-1}f(x).$$
It is easy to prove by induction the following formula (see also \cite[Identity 5.40]{concrete_1994})
\begin{equation}
\label{eq:a_diff}
\Delta^af(x)=\sum_{j}\binom{a}{j}(-1)^{a-j}f(x+j).
\end{equation}

We are now ready to proof Lemma \ref{lem:sum_tec}.

Choosing $f(x)={x\brack x-l_1}$ in (\ref{eq:a_diff}) we see that
\begin{align*}
\Delta^a {x\brack x-l_1}\Bigg|_{x=0}&=\sum_{j}\binom{a}{j}(-1)^{a-j}{x+j\brack x+j-l_1}\Bigg|_{x=0}\\
&=\sum_{j}\binom{a}{j}(-1)^{j}{j\brack j-l_1}.
\end{align*}
Applying equations (\ref{eq:euler1}), (\ref{eq:euler3}) and  the following identity 
$
\Delta^a \binom{x+l}{2l_1}\Big|_{x=0}= \begin{cases} 0 &\mbox{if } 2l_1 <a \\
1 & \mbox{if } 2l_1= a \end{cases},
$
we easily derive
\begin{align*}
\Delta^a {x\brack x-l_1}\Bigg|_{x=0}&=\sum_{l}\left\langle\left\langle l_1\atop l\right\rangle\right\rangle\Delta^a \binom{x+l}{2l_1}\Bigg|_{x=0}\\
&=\begin{cases} 0 &\mbox{if } 2l_1 <a \\
\sum_{l}\left\langle\left\langle \frac{a}{2}\atop l\right\rangle\right\rangle=\frac{a!}{\left(\frac{a}{2}\right)!2^{\frac{a}{2}}} & \mbox{if } 2l_1= a. \end{cases}
\end{align*}
This is enough to prove Lemma \ref{lem:sum_tec}. 
\end{proof}

\begin{proof}(Theorem \ref{thm:mainexactodd2})
Let us recall that, the asymptotic result of Theorem \ref{thm:mainexactodd2} for all exponents $a\ge 1$ follows from 
Theorem \ref{thm:mainexactodd} when $a$ is positive even natural
and the following probabilistic representation of absolute moments in terms of characteristic functions.

\begin{theorem}[cf. \cite{ushakov}, \cite{bahr}]  
\label{thm:maina}
Let $Y$ be a random variable with the distribution function $F(x)$ and the characteristic function $\varphi(t).$ Assume that
$\mathbf{E}\left[{|Y|^a}\right]<\infty,$ where $a\ge 1$ and $a$ is not an even integer. Let $\alpha_k=\mathbf{E}[Y^k],$ where $k$ is nonnegative integer. Then 
\begin{align*}
\mathbf{E}\left[|Y|^a\right]&=\frac{\Gamma(a+1)}{\pi}\cos\frac{(a+1)\pi}{2}\times\\
&\times\int_{-\infty}^{\infty}\frac{\Re\varphi(t)-\sum_{k=0}^{\left[\frac{a}{2}\right]}\frac{(-1)^k\alpha_{2k}t^{2k}}{(2k)!}}{|t|^{a+1}}dt,
\end{align*}
where $\left[\frac{a}{2}\right]$ is the greatest integer less than or equal to $\frac{a}{2}.$
\end{theorem}

As a first step, note that if $a\ge 1$ and a is an even integer the result of Theorem \ref{thm:mainexactodd2} follows from Theorem \ref{thm:mainexactodd}, as well as the identity $\Gamma(\frac{a}{2}+2)=\left(\frac{a}{2}+1\right)!.$

Therefore, we may assume that $a\ge 1$ and $a$ is not an even integer. Let $k$ be nonnegative integer.
Let $X_i$ be the arrival time of the $i-$th event in a Poisson process with arrival rate $n.$

We investigate the random variables $Y_i=\left(X_i-\frac{i-1}{n}\right)$
with its  characteristic function $\varphi_i(t),$  for $i = 2, \ldots , n.$ Observe that
\begin{equation}
\label{eq:charac}
\Re\varphi_i(t)-\sum_{k=0}^{\left[\frac{a}{2}\right]}\frac{(-1)^k\mathbf{E}[Y_i^{2k}]t^{2k}}{(2k)!}
=\sum_{k=\left[\frac{a}{2}\right]+1}^{\infty}\frac{(-1)^k\mathbf{E}[Y_i^{2k}]t^{2k}}{(2k)!}.
\end{equation}
Combining together Equation (\ref{eq:charac}) and Theorem \ref{thm:maina} for $Y:=Y_i$ we get
\begin{align}
\nonumber\mathbf{E}\left[|Y_i|^a\right]&=2\frac{\Gamma(a+1)}{\pi}\cos\frac{(a+1)\pi}{2}\times\\ 
&\times\label{eq:characer}\int_{0}^{\infty}\sum_{k=\left[\frac{a}{2}\right]+1}^{\infty}\frac{(-1)^k\mathbf{E}[Y_i^{2k}]t^{2k-a-1}}{(2k)!}dt.
\end{align}
Putting together Equation (\ref{eq:characer}) with Theorem \ref{thm:mainexactodd} for $a:=2k$ we derive
\begin{align*}
&\mathbf{E}\left[\sum_{i=2}^{n}|Y_i|^a\right]=2\frac{\Gamma(a+1)}{\pi}\cos\frac{(a+1)\pi}{2}\int_{0}^{\infty}\times\\ 
&\times\sum_{k=\left[\frac{a}{2}\right]+1}^{\infty}\frac{(-1)^kt^{2k-a-1}}{(2k)!}\left(\frac{(2k)!}{2^k(k+1)!}n^{1-k}+O\left(n^{-k}\right)\right)dt.
\end{align*}
Substitution $t=\sqrt{n}y$ in the last integral leads to
\begin{align}
\nonumber&\mathbf{E}\left[\sum_{i=2}^{n}|Y_i|^a\right]= 2\frac{\Gamma(a+1)}{\pi}\cos\frac{(a+1)\pi}{2}\times\\
&\nonumber\times\int_{0}^{\infty}\sum_{k=\left[\frac{a}{2}\right]+1}^{\infty}\frac{(-1)^k }{2^k(k+1)!}y^{2k-a-1}dy\left( n^{1-\frac{a}{2}}\right)\\
&\label{eq:intgrare01} +O\left( n^{-\frac{a}{2}}\right).
\end{align}
Let
\begin{align}
\nonumber C_a&=2\frac{\Gamma(a+1)}{\pi}\cos\frac{(a+1)\pi}{2}\times\\
\label{eq:intgrare02}&\times \int_{0}^{\infty}\sum_{k=\left[\frac{a}{2}\right]+1}^{\infty}\frac{(-1)^k}{2^k(k+1)!}y^{2k-a-1}dy. 
\end{align}
Using the identity
\begin{align}
\nonumber&\int_{0}^{\infty}\sum_{k=\left[\frac{a}{2}\right]+1}^{\infty}\frac{(-1)^k}{2^k(k+1)!}y^{2k-a-1}dy\\
\label{eq:intgrare03}&=\frac{-1}{2^{1+\frac{a}{2}}}\Gamma\left(-1-\frac{a}{2}\right)
\,\,\,\text{when}\,\,\,a\,\,\text{is not an even integer.}
\end{align}
we easily have
\begin{equation}
C_a=\frac{\Gamma(a+1)}{\pi}\cos\frac{(a+1)\pi}{2}
\label{eq:intgrare04} \frac{-1}{2^{\frac{a}{2}}}\Gamma\left(-1-\frac{a}{2}\right).
\end{equation}
\begin{Remark}
The following Mathematica code can be used to confirm the validity
of Identity (\ref{eq:intgrare03}).
\begin{verbatim}
Assuming[a>0 && 0<a-2IntegerPart[a/2]<2,
Integrate[Sum[((-1)^k)/(2^k*(k+1)!)
*y^(2k-a-1), {k, IntegerPart[a/2]+1, 
Infinity}],{y, 0, Infinity}]]
\end{verbatim}
\end{Remark}
From  Euler's reflection formula $\Gamma(1-z)\Gamma(z)=\frac{\pi}{\sin(\pi z)}$ (see \cite[Identity 5.5.3]{NIST})
for $z:=2+ \frac{a}{2}$, the identity $\frac{\cos\left(\frac{(a+1)\pi}{2}\right)}{\sin\left(\left(2+\frac{a}{2}\right)\pi\right)}=-1,$ as well as Equation (\ref{eq:intgrare04}) we derive
\begin{equation}
\label{eq:intgrare05}
C_a=\frac{\Gamma\left(a+1\right)}{2^{\frac{a}{2}}\Gamma\left(2+\frac{a}{2}\right)}.
\end{equation}
Together (\ref{eq:intgrare01}), (\ref{eq:intgrare02}) and (\ref{eq:intgrare05}) we conclude that
the expected $a-$total movement of algorithm $MV_1\left(n,\frac{1}{n}\right)$ is
$
\frac{\Gamma\left(a+1\right)}{2^{\frac{a}{2}}\Gamma\left(2+\frac{a}{2}\right)}n^{1-\frac{a}{2}}+O\left(n^{-\frac{a}{2}}\right)$
which completes the proof of Theorem \ref{thm:mainexactodd2}.
\end{proof}
\begin{proof} (Theorem \ref{thm:mainexactfractlower1a}) 
Assume that $a>1.$ Let $E^{(a)}_i$ be the expected
distance to the power a of $i-$th sensor for $i=1,2,\dots, n.$
Then we use discrete H\"older inequality with parameters $a$ and $\frac{a}{a-1}$
and get
\begin{eqnarray}
\sum_{i=1}^{n}E^{(1)}_i
&\le& \notag
\left(\sum_{i=1}^{n}\left(E^{(1)}_i\right)^{a}\right)^{\frac{1}{a}}
\left(\sum_{i=1}^{n}1\right)^{\frac{a-1}{a}}\\
&=&  \label{eq:holder2cc}
\left(\sum_{i=1}^{n}\left(E^{(1)}_i\right)^{a}\right)^{\frac{1}{a}}
n^{\frac{a-1}{a}}.
\end{eqnarray}
Next we use Jensen's inequality (see (\ref{eq:jensen})) for $f(x)=x^{a}$ and $\mathbf{E}[X]=E^{(1)}_i$
and get
\begin{equation}
\label{eq:jensen_b3a}
\left(E^{(1)}_i\right)^{a}\le E^{(a)}_i.
\end{equation}
Combining together (\ref{eq:holder2cc}), (\ref{eq:jensen_b3a}) and Theorem \ref{thm:mainexactfractlower}  we deduce that
\begin{align*}
\sum_{i=1}^{n}E^{(a)}_i&\ge \left(\sum_{i=1}^{n}E_{i}^{(1)}\right)^{a}
n^{-a+1}=\left(\Omega\left(\sqrt{n}\right)\right)^{a}
n^{-a+1}\\
&=
\Omega\left(n^{1-\frac{a}{2}}\right).
\end{align*}
This is enough to prove the lower bound and completes the proof of Theorem \ref{thm:mainexactfractlower1a}.
\end{proof}

\begin{proof} (Theorem \ref{thm:mainexactfract_epsilon} )
Let $D^{(a)}_i$ be the expected distance to the power $a$ between $X_i-X_1$ and the $i^{th}$ sensor position. 
Let $b$ be the even natural number such that $b-a>0.$
Then we
proceed as in the upper bound treatment from the proof of Theorem \ref{thm:mainexactfract} and get

\begin{equation}
 \label{eq:holder1eps}
\sum_{i=2}^{n}D^{(a)}_i\le
\left(\sum_{i=2}^{n}\left(D^{(a)}_i\right)^{\frac{b}{a}}\right)^{\frac{a}{b}}
(n-1)^{\frac{b-a}{b}}
\end{equation}
\begin{equation}
\label{eq:jensen_b2eps}
\left(D^{(a)}_i\right)^{\frac{b}{a}}\le D^{(b)}_i
\end{equation}
Combining  together (\ref{eq:holder1eps}), (\ref{eq:jensen_b2eps}) and Theorem \ref{thm:mainexactodd} we deduce that
$$
\sum_{i=2}^{n}D^{(a)}_i\le
\left(\Theta(n)\right)^{\frac{a}{b}}(n-1)^{\frac{b-a}{b}}=\Theta(n).
$$
This is sufficient to complete the proof of Theorem \ref{thm:mainexactfract_epsilon}.
\end{proof}
\begin{proof} (Theorem \ref{thm:mainexactfractlower_eps})

Fix $\tau\ge \epsilon >0$ independent on $n$ and $\lambda.$
Let $X_i$ be the arrival times of the $i-$th event in Poisson process with arrival rate. 
We assume that the algorithm moves the sensor $X_i$ to the position $b_i=Z+\frac{1+\Delta_i}{\lambda}(i-1)$ provided $\epsilon \le \Delta_i\le \tau$ for $i=1,2,\dots,n,$
where $Z$ is the nonnegative random variable with $\mathbf{E}[Z]<\infty.$

It is sufficient to show that
$$\sum_{i=1}^n\mathbf{E}\left[|X_i-b_i|\right]\in\frac{\Omega\left(n^2\right)}{\lambda}.$$ 

Applying Inequality (\ref{eq:first101}) 
for $Z:=X_i-Z-\Delta_i\frac{i-1}{\lambda},\,\,$  $q = \frac{i-1}{\lambda}$ and Equation (\ref{integral_2}) for $l=i$ we get
\begin{align*}
\sum_{i=1}^n&\mathbf{E}\left[\left|X_i-\left(Z+\frac{1+\Delta_i}{\lambda}(i-1)\right)\right|\right]\\
&\ge\sum_{i=1}^n\left|\frac{\mathbf{E}[\Delta_i]}{\lambda}i+\mathbf{E}[Z]-\frac{1+\mathbf{E}[\Delta_i]}{\lambda}\right|\\
&\ge\sum_{i=1}^n\left(\frac{\mathbf{E}[\Delta_i]}{\lambda}i+\mathbf{E}[Z]-\frac{1+\mathbf{E}[\Delta_i]}{\lambda}\right)\\
&\ge\sum_{i=1}^n\left(\frac{\epsilon}{\lambda}i+\mathbf{E}[Z]-\frac{1+\tau}{\lambda}\right)=\frac{\Theta\left(n^2\right)}{\lambda}
\end{align*}
This completes the proof of Theorem \ref{thm:mainexactfractlower_eps}. 
\end{proof}
\begin{proof} (Theorem \ref{thm:mainexactfractlower_eps1} )

Assume that $a>1.$ Let $E^{(a)}_i$ be the expected
distance to the power a of $i-$th sensor for $i=1,2,\dots, n.$
As in the proof of Theorem \ref{thm:mainexactfractlower1a} we get two inequalities:
\begin{equation}
\label{eq:holder2}
\sum_{i=1}^{n}E^{(1)}_i\le
\left(\sum_{i=1}^{n}\left(E^{(1)}_i\right)^{a}\right)^{\frac{1}{a}}
n^{\frac{a-1}{a}},
\end{equation}
\begin{equation}
\label{eq:jensen_b3}
\left(E^{(1)}_i\right)^{a}\le E^{(a)}_i.
\end{equation}
Combining together (\ref{eq:holder2}), (\ref{eq:jensen_b3}) and Theorem \ref{thm:mainexactfractlower_eps}  we deduce that
$$
\sum_{i=1}^{n}E^{(a)}_i\ge \left(\sum_{i=1}^{n}E_{i}^{(1)}\right)^{a}
n^{-a+1}=\left(\frac{\Omega\left(n^2\right)}{\lambda}\right)^{a}
n^{-a+1}=
\frac{\Omega\left(n^{1+a}\right)}{\lambda^a}.
$$
This is enough to prove the lower bound and completes the proof of Theorem \ref{thm:mainexactfractlower_eps1}.
\end{proof}

\begin{proof} (Theorem \ref{thm:interfere} )
Let $s=\frac{1-\epsilon}{\lambda}, v=\frac{1+\tau}{\lambda},$ where $\epsilon$ and $\tau$ are arbitrary small constants independent on $n$ and $\lambda.$

Firstly, we consider the following scenario. 
\begin{itemize}
\item Algorithm \ref{alg_interference} leaves the sensors $X_1, X_2,\dots,X_{i}$ at the same positions. 
\item Algorithm \ref{alg_interference} moves the sensors $X_{i+1}, X_{i+2}, \dots X_{i+q}$ at the new locations.
\item Algorithm \ref{alg_interference} leaves the sensors $X_{i+q+1}, X_{i+q+2},\dots,X_{n}$ at the same positions.
\end{itemize}
Let $q=q_1+q_2+\dots+q_k$ for some $q_1,q_2,\dots,q_k\in\mathbb{N_+}.$
Define $$S_j=\begin{cases}i\,\,\,\text{if}\,\,\, j=0\\
i+q_1+q_2+\dots +q_j\,\,\,\text{if}\,\,\,j\in\{1,2,\dots, k-1\}.
\end{cases}$$
Consider an integer configuration $(q_1,q_2,\dots,q_k)$ as specified above.
\begin{itemize}
\item For each $j\in\{0,1,\dots, k-1\}$  Algorithm \ref{alg_interference} moves the sensors
$X_{S_j+1}, X_{S_j+2},\dots, X_{S_j+q_{j+1}}$ in the one chosen direction
left to right or right to left. 
\item For each $j\in\{0,1,\dots, k-2\}$  the movement direction of sensors $X_{S_j+1}, X_{S_j+2},\dots, X_{S_j+q_{j+1}}$ is opposite to the movement direction
of sensors $X_{S_{j+1}+1}, X_{S_{j+1}+2},\dots, X_{S_{j+1}+q_{j+2}}.$
\end{itemize}
Let $T(q_1,q_2,\dots, q_k)$ be the movement to the power $a$ of the sensors $X_{i+1}, X_{i+2}, \dots X_{i+q}$
and let $T(q_k)$ be the displacement to the power $a$ of the sensors $X_{S_{k-1}+1}, X_{S_{k-1}+2},\dots, X_{S_{k-1}+q_{k}}.$
There are two cases to consider.

\textit{Case 1. The sensor $X_{S_{k-1}}$ moves left to right to the new position $Y$ and
the sensors $X_{S_{k-1}+1}, X_{S_{k-1}+2},\dots, X_{S_{k-1}+q_{k}}$ move right to left.}

Observe that
$$T(q_k)=\sum_{l=1}^{q_k}\left(\left|X_{S_{k-1}+l}-\left(Y+vl\right)\right|^{+}\right)^a.$$
Since $X_{S_{k-1}}<Y$ we upper bound the displacement $T(q_k)$ as follows
$$T(q_k)\le\sum_{l=1}^{q_k}\left(\left|X_{S_{k-1}+l}-\left(X_{S_{k-1}}+vl\right)\right|^{+}\right)^a.$$
Using Identity $X_{j+l}-X_{j}=X_l$ for $j:=S_{k-1}$ (see (\ref{eq:probadens})) we get
\begin{equation}
\label{eq:tab}
T(q_k)\le \sum_{l=1}^{q_k} (|X_{l}-vl|^{+})^a.
\end{equation}

\textit{Case 2. The sensor $X_{S_{k-1}}$ moves right to left to the new position $Z$ and
the sensors $X_{S_{k-1}+1}, X_{S_{k-1}+2},\dots, X_{S_{k-1}+q_{k}}$ move left to right.}

Observe that
$$T(q_k)=\sum_{l=1}^{q_k}\left(\left|Z+sl-X_{S_{k-1}+l}\right|^{+}\right)^a.$$
Since $Z<X_{S_{k-1}}$ we upper bound the displacement $T(q_k)$ as follows
$$T(q_k)=\sum_{l=1}^{q_k}\left(\left|X_{S_{k-1}}+sl-X_{S_{k-1}+l}\right|^{+}\right)^a.$$
Using Identity $X_{j+l}-X_{j}=X_l$ for $j:=S_{k-1}$ (see (\ref{eq:probadens})) we get
\begin{equation}
\label{eq:tabb}
T(q_k)\le \sum_{l=1}^{q_k} (|sl-X_{l}|^{+})^a.
\end{equation}

Combining together inequalities (\ref{eq:tab}) and (\ref{eq:tabb}) we have
\begin{align*}
T(q_1,q_2,\dots, q_k)\le & T(q_1,q_2,\dots, q_{k-1})+ \sum_{l=1}^{q_k} (|X_{l}-vl|^{+})^a\\
+& \sum_{l=1}^{q_k} (|sl-X_{l}|^{+})^a.
\end{align*}
Hence, by induction we get
\begin{align*}
T(q_1,&q_2,\dots, q_k)\\\ 
&\le \sum_{j=1}^{k}\sum_{l=1}^{q_j} (|X_{l}-vl|^{+})^a+ \sum_{j=1}^k\sum_{l=1}^{q_j} (|sl-X_{l}|^{+})^a.
\end{align*}
Next we make an important observation that extends our estimation to general scenario of Algorithm \ref{alg_interference}. 
Let $T_a$ be the $a-$total displacement of Algorithm \ref{alg_interference}.

Observe that the displacements $\left(|sl-X_{l}|^{+}\right)^a$ and $\left(|X_{l}-vl|^{+}\right)^a$ can appear in the Algorithm \ref{alg_interference} at most $\frac{n}{l}$ times. Therefore
$$
T_a\le\sum_{l=1}^{n}\frac{n}{l}(|sl-X_{l}|^{+})^a+\sum_{l=1}^{n}\frac{n}{l}(|X_{l}-vl|^{+})^a
$$
Passing to the expectations we have
$$
\mathbf{E}\left[T_a\right]\le\sum_{l=1}^{n}\frac{n}{l} \mathbf{E}\left[(|sl-X_{l}|^{+})^a\right]+\sum_{l=1}^{n}\frac{n}{l} \mathbf{E}\left[(|X_{l}-vl|^{+})^a\right]
$$
Finally, using  Lemma \ref{lemma_d} for $s=\frac{1-\epsilon}{\lambda}$ and  $v=\frac{1+\tau}{\lambda}$
we conclude that
$\mathbf{E}\left[T_a\right]=\frac{O\left(n\right)}{\lambda^a}.$ This is enough to prove Theorem \ref{thm:interfere}.
\end{proof}
\begin{proof} (Lemma \ref{lemma_d})

Fix $a\ge 1.$ There are two cases to consider

\textit{Case 1: Inequality (\ref{eq:lem_d1})}

Assume that $X_l$ obeys Gamma distribution with parameters $l\in\mathbf{N}\setminus\{0\}$ and $\lambda>0.$
Let $s=\frac{1-\epsilon}{\lambda},$ where $\epsilon>0$ is some constant independent on $n$ and $\lambda.$

The following technical inequality for Gamma distribution is known.
\begin{align}
\nonumber\label{eq:betalena}\mathbf{E}&\left[\left((sl-X_{l})^{+}\right)^a\right]\\
&\le\left(\frac{1-\epsilon}{\lambda}\right)^a\left(l^a(le+2)\left((1-\epsilon)e^{\epsilon}\right)^l+\frac{l^a}{\epsilon}\frac{\left((1-\epsilon)e^{\epsilon}\right)^l}{e^l}\right)
\end{align}
(see the estimation for $k:=l$ and $\delta=\epsilon$ after Inequality (35) in \cite{pervasiveKAPELKO}).

Using elementary inequality $(1-\epsilon)e^{\epsilon}<1$ when $\epsilon\in(0,1)$ we deduce that
\begin{align}
\nonumber\sum_{l=1}^{n}&\frac{1}{l}\left(\frac{1-\epsilon}{\lambda}\right)^a\left(l^a(le+2)\left((1-\epsilon)e^{\epsilon}\right)^l+\frac{l^a}{\epsilon}\frac{\left((1-\epsilon)e^{\epsilon}\right)^l}{e^l}\right)\\
\label{eq:astra01}&=\frac{O(1)}{\lambda^a}.
\end{align}
Combining together (\ref{eq:betalena}), (\ref{eq:astra01}) we get
$$
\sum_{l=1}^{n}\frac{n}{l} \mathbf{E}\left[(|sl-X_{l}|^{+})^a\right]=\frac{O\left(n\right)}{\lambda^a}.
$$
This is enough to prove Inequality (\ref{eq:lem_d1}).

\textit{Case 2: Inequality (\ref{eq:lem_d2})}

Assume that $X_l$ obeys Gamma distribution with parameters $l\in\mathbf{N}\setminus\{0\}$ and $\lambda>0.$
Let $v=\frac{1+\tau}{\lambda},$ where $\tau>0$ is some constant independent on $n$ and $\lambda.$

The following technical inequality for Gamma distribution is also known.
\begin{align}
\nonumber\label{eq:betalenb}\mathbf{E}&\left[\left((X_{l}-v l)^{+}\right)^{ a}\right]\\
&\frac{1}{\lambda^{a}}\left(f_{a,\tau}(l)\right)^{\frac{a}{\lceil a\rceil}}\left(\left(\frac{1+\tau}{e^{\tau}}\right)^{\frac{a}{\lceil a\rceil}}\right)^l,
\end{align}
where
\begin{align*}
&f_{a,\tau}(l)\\
&=(l+\lceil a\rceil)^{\lceil a\rceil}\left((l+1)+(\lceil a\rceil-1)(1+\epsilon)^{\lceil a\rceil-1}l^{\lceil a\rceil-1}\right)
\end{align*}
(see the estimation for $k:=l$ and $\epsilon:=\tau$ after Inequality (27) in \cite{pervasiveKAPELKO}).

Applying elementary inequality $\frac{1+\tau}{e^{\tau}}<1,$ when $\tau>0$ we get
$\left(\frac{1+\tau}{e^{\tau}}\right)^{\frac{a}{\lceil a\rceil}}<1.$ Since $\left(f_{a,\tau}(l)\right)^{\frac{a}{\lceil a\rceil}}$ is bounded by
some polynomial fo variable $l$ and degree $l^{\lceil a\rceil+1}$ we deduce that
\begin{equation}
\sum_{l=1}^{n}\frac{1}{\lambda^{a}}\frac{\left(f_{a,\tau}(l)\right)^{\frac{a}{\lceil a\rceil}}}{l}
\left(\left(\frac{1+\tau}{e^{\tau}}\right)^{\frac{a}{\lceil a\rceil}}\right)^l
\label{eq:astra02}=\frac{O(1)}{\lambda^a}.
\end{equation}
Combining together (\ref{eq:betalenb}), (\ref{eq:astra02}) we get
$$
\sum_{l=1}^{n}\frac{n}{l} \mathbf{E}\left[(|X_{l}-vl|^{+})^a\right]=\frac{O\left(n\right)}{\lambda^a}.
$$

This is enough to prove desired Inequality (\ref{eq:lem_d2}) and it completes the proof of Lemma \ref{lemma_d}.
\end{proof}
\begin{proof} (Lemma \ref{lem:cruciana})
First of all, we recall the following elementary inequality.

Fix $a\ge 1.$ Let $x,y \ge 0.$ Then
\begin{equation}
\label{eq:duszal}
(x+y)^a\le 2^{a-1}(x^a+y^a).
\end{equation}
Notice that, Inequality (\ref{eq:duszal}) is the consequence of the fact that $f(x)=x^a$ is convex over $\mathbf{R_+}$ for $a\ge 1.$

Applying $(d-1)$ times Inequality (\ref{eq:duszal}) for the sequence $M_1, M_2, \dots M_d$ and passing to the expectations we easily derive 
$$\mathbf{E}[M^a]\le C_{a,d}\sum_{i=1}^{d}\mathbf{E}[M_i^a].$$
This is enough to prove Lemma \ref{lem:cruciana}.
\end{proof}

\begin{proof} (Theorem \ref{thm:mainexacthigh_eps})

Fix $\tau\ge \epsilon>0$ independent on $n.$
By Theorem \ref{thm:mainexactfract_epsilon} applied to
$n:=n^{1/d}$ and 
for $n^{(d-1)/d}$ columns and $n^{(d-1)/d}$ rows, as well as Lemma \ref{lem:cruciana} 
we have that the expected $a-$total movement of algorithm $MV_d\left(n,\frac{1+\epsilon}{\lambda}\right)$ is 
$$2C_{a,d}n^{{(d-1)}/d}\frac{O\left(\left(n^{1/d}\right)^{1+a}\right)}{\lambda^a}=\frac{O\left(n^{1+\frac{a}{d}}\right)}{\lambda^a}.$$
This completes the prove of the upper bound.

By Theorem \ref{thm:mainexactfractlower_eps} and Theorem \ref{thm:mainexactfractlower_eps1} applied to
$n:=n^{1/d}$ and 
for $n^{(d-1)/d}$ columns 
we have that the following lower bound
$$n^{{(d-1)}/d}\frac{\Omega\left(\left(n^{1/d}\right)^{1+a}\right)}{\lambda^a}=\frac{\Omega\left(n^{1+\frac{a}{d}}\right)}{\lambda^a}.$$
This is enough to prove Theorem \ref{thm:mainexacthigh_eps}.
\end{proof}
\begin{proof} (Theorem \ref{thm:mainexacthigh_eps_minus} )

Fix $\epsilon, \tau>0$ independent on $n.$ By Theorem \ref{thm:interfere} applied to
$n:=n^{1/d}$ and 
for $n^{(d-1)/d}$ columns and $n^{(d-1)/d}$ rows , as well as Lemma \ref{lem:cruciana} 
we have the following upper bound
$$2C_{a,d}n^{{(d-1)}/d}\frac{O\left(n^{1/d}\right)}{\lambda^a}=\frac{O\left(n\right)}{\lambda^a},$$
which proves the theorem.
\end{proof}
\end{document}